\documentclass[journal]{IEEEtran}                                                          

\IEEEoverridecommandlockouts                              
                           

\newcommand{\ignore}[1]{}
\pdfminorversion=4
\usepackage{amsthm}

\usepackage{psfrag}

\bibliographystyle{IEEEtran}
\usepackage{amsmath,amssymb}

\usepackage{amsfonts}
\usepackage{algcompatible}
\usepackage[linesnumbered,vlined,ruled,boxed,commentsnumbered]{algorithm2e}
\usepackage{cite,color,comment,xspace}
\usepackage{times}
\usepackage{enumerate}
\usepackage{graphicx}
\usepackage{subfigure}
\usepackage{multirow}
\usepackage{epsfig}
\usepackage{nomencl}
\usepackage{textcomp}
\usepackage{soul}
\makenomenclature



\graphicspath{{figs/}}

\newtheorem{theorem}{Theorem}
\newtheorem{lemma}[theorem]{Lemma}
\newtheorem{remark}[theorem]{Remark}
\newtheorem{proposition}[theorem]{Proposition}

\newtheorem{definition}[theorem]{Definition}

\newtheorem{example}[theorem]{Example}
\newtheorem{rul}{Reduction Rule}
\newtheorem{condi}{Condition}
\newcommand{\red}{\color{red}}
\newcommand{\blue}{\color{blue}}


\def\b#1{\mathchoice{\hbox{\boldmath $\displaystyle #1$}}
        {\hbox{\boldmath $\textstyle #1$}}
        {\hbox{\boldmath $\scriptstyle #1$}}
        {\hbox{\boldmath $\scriptscriptstyle #1$}}}
\newcommand{\preset}[1]{\ensuremath{\,\!^\bullet{#1}}}
\newcommand{\postset}[1]{\ensuremath{{#1}^\bullet}}
\newcommand{\N}{{\ensuremath{\mathcal{N}}}}    
\newcommand{\nat}{\mathbb{N}}

\def\BibTeX{{\rm B\kern-.05em{\sc i\kern-.025em b}\kern-.08em
    T\kern-.1667em\lower.7ex\hbox{E}\kern-.125emX}}

\begin{document}

\title{On Liveness Enforcement of Distributed Petri Net Systems}


\author{Daniel Clavel, Cristian Mahulea and Manuel Silva \thanks{The authors are with
Arag\'on Institute of Engineering Research (I3A), University of Zaragoza, Spain, emails: \{clavel, cmahulea, silva\}@unizar.es.}}


\maketitle
\thispagestyle{empty}

\begin{abstract}
This paper considers the liveness enforcement problem in a class of Petri nets (PNs) modeling distributed systems called \emph{Synchronized Sequential Processes} (SSP). This class of PNs is defined as a set of mono-marked state machines (sequential machines, called also agents) cooperating in a distributed way through buffers.  These buffers could model intermediate products in a production system or information channel in a healthcare system but they should be \emph{destination private} to an agent. The designed controller for liveness enforcement should preserve this important property characteristic to the distributed systems. The approach in this paper is based on the construction of a \emph{control PN} that is an abstraction of the relations of the T-semiflows and buffers. The control PN will evolve in parallel with the system, avoiding the firing of transitions that may lead the system to livelock. An algorithm to compute this control PN is presented. Moreover, in order to ensure the liveness of control PN, another algorithm is proposed allowing the firing of local T-semiflow in the correct proportion. Finally, an algorithm for guiding the system evolution is also proposed. 
\end{abstract}

\section{Introduction}
One of the most important logical property of discrete systems is \emph{liveness}. A system is \emph{live} for a given initial state (marking) if it will never reach a (partially) blocking state. In Petri net (PN) literature, liveness analysis has been extensively studied and there exist many results. Furthermore, \emph{structural liveness} (a PN is structurally live if there exists an initial marking that is making the net system live, see for example \cite{ARMura89,ArSiTeCo98,ezpeleta,BOIoAn06}) can be studied by the so called rank theorem \cite{SilvaDSSP,ArSiTeCo98}. However, if the system is not live or not structurally live, constructing a controller in order to force these properties it is a very challenging problem. For some particular classes of Petri nets, there exists many results, as for example in Resource Allocation Systems (RAS) \cite{ezpeleta,park2001deadlock,Colom:2003,li2004elementary,cano2012,7870672}. RAS systems are modular PNs composed by different processes \emph{competing} for shared resources.

In this paper, we consider a different class of systems than the one considered in RAS. In particular, this paper considers modular systems composed also by different processes, called \emph{agents}, but the difference is that here, they are \emph{cooperating} in a distributed way. This cooperation is realized through a set of \emph{buffers} to/from which the agents consume/produce (partial) products. This new class is defined in Def. \ref{def:ssp} and it is called \emph{Synchronized Sequential Processes} (SSP). In order to allow distribution, there exist one important constraint in the assignment of buffers, namely \emph{tokens from a given buffer can only be consumed by a particular agent}, i.e., are destination private but not output private. 

Since distributed systems are considered in this paper, agents are in general geographically distributed. Therefore it is quite natural to assume that any input buffer is in the same location as the corresponding agent and only that agent can consume intermediate products from the buffer.
If the system is modeling a healthcare system, buffers could model information channels containing information to be used only by a clinical pathway \cite{ARBeMaAl19} because of privacy, for example. One advantage of keeping the system distributed is to have more computational attractive analysis approaches since some properties of the system can be studied using the local perspectives due to its modular structure. 

SSP are somehow derived from the well known class of Deterministically Synchronized Sequential Processes (DSSP)\cite{SilvaDSSP}. However, there are some differences that are explained after Def. \ref{def:ssp}. One of the two main differences is that in SSP, buffers can constraints the internal choices, fact that is not allowed in DSSP. However, if a DSSP is not live, in order to force liveness, it is necessary to control the conflicts and by adding some controllers and the net system will not be anymore DSSP. As it is well-known, structurally live and structurally bounded PNs should be consistent and
conservative, two structural properties that assumed all along this work (they can be checked in polynomial time using linear programming \cite{ArSiTeCo98}).

The most important aspect of the purposed liveness enforcement controller in this paper is to keep the distributed property of the system, hence to keep the buffers (included the new \emph{control} buffers) destination private. For this reason, it is not possible to use the results from RAS, a topic discussed in Sec. \ref{sec:dssp}. 

\begin{figure}[h] 
\begin{center}
\centering
\psfrag{b5}{$b_{5}$}\psfrag{b6}{$b_{6}$}
\psfrag{b7}{$b_{7}$}\psfrag{b8}{$b_{8}$}
\psfrag{b1}{$b_{1}$}\psfrag{b2}{$b_{2}$}
\psfrag{b3}{$b_{3}$}\psfrag{b4}{$b_{4}$}
\psfrag{Car A}{$Car A$}\psfrag{Car B}{$Car B$}
 \includegraphics[width=.8\columnwidth]{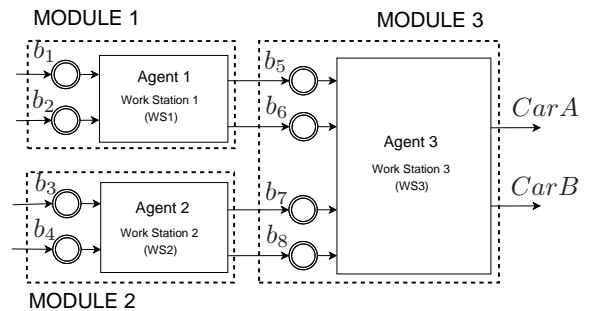} 
\caption{\small Motivation example: a distributed production system composed by three work stations and eight buffers} \label{fig:PracExam}
\end{center}
\end{figure}

As a motivation example, let us consider the production system which constructs cars represented in Fig. \ref{fig:PracExam}. Two different types of cars can be manufactured (model A or model B). The system is composed by 3 work stations denoted as WS1, WS2, and WS3. The WS1 consumes raw material from input buffers $b_1$ (model A) or $b_2$ (model B) and produces intermediate products (engines) to buffers $b_5$ (model A) or $b_6$ (model B). Similarly, WS2 consumes raw material from buffers $b_3$ (model A) or $b_4$ (model B) and produces intermediate products (windshield) to buffers $b_7$ (model A) or $b_8$ (model B). Finally, the WS3 manufactures cars of type A or type B by creating the corresponding bodywork and assembling the intermediate products. Engine from $b_5$ and windshield from $b_7$ are used to obtain cars of type A while engine from $b_6$ and windshield from $b_8$ are used to produce cars of type B. The system can be distributed in 3 different modules (shown in Fig. \ref{fig:PracExam}), one for each agent. If these three agents are geographical distributed, then all input buffers of agents should be private and located at the same position. 

The intermediate products of the input buffers can be assigned to the required activity exactly in the moment when they are needed but also they can be \emph{pre-assigned}. In the case of car producing, first the bodywork is created by agent 3, then an engine from $b_5$ or $b_6$ should be assembled in a second step and finally, in the third step a windshield from $b_7$ or $b_8$ is added. Notice that the engine is assigned to the process in the second step while the windshield in the third step. However, one can \emph{reserve} (or pre-assign) them when the process is started (when the bodywork is started to be produced) and in this way if the process of producing a car is started will finish for sure. Using this pre-asigment approach, in \cite{clavelCDC} a method to ensure that the blocking situations (including partial blocking) will not appear has been proposed. Section \ref{sec:dssp} discusses this approach and also the one based on controlling bad siphons.

The control policy proposed in this paper is based on two main ideas: (i) a local T-semiflow can start firing only if all its input buffers have enough tokens to fire all its transitions and (ii) when the firing of a local T-semilfow starts, all its transitions should be fired before start firing other local T-semiflow, i.e., do not locally interrupt a production task. For the implementation of this control policy, a control PN is defined being the scheduler of the original SSP. Both systems, the original non live SSP and the control PN will evolve synchronously such that the control PN will prohibit the firing of transitions that may lead the SSP to livelock.

In order to enforce liveness of the SSP, it is necessary to ensure that the control PN is live. An algorithm to force the liveness in the control PN is proposed by adding new constraint places. These new places can be seen as new \emph{virtual} buffers with only one output transition. Consequently, the distributiveness of the SSP systems is preserved.

The paper is organized as follows. Sec.~\ref{sec:preliminaries} shows basic concepts and notations. Sec.~\ref{sec:dssp} points out the problems of the techniques based on controlling bad-siphons and that used in~\cite{clavelCDC} and provides some intuition behind the proposed method in this paper. In Sec.~\ref{sec:alg} an algorithm to build the control PN from a SSP structure is given. In Sec.~\ref{sec:live} a methodology to ensure the liveness in the control PN is described while in Sec.~\ref{sec:contr} rules to \emph{guide} the SSP evolution through the control net system are given. Finally, in Sec.~\ref{sec:con}, some conclusions and future works are considered.

\section{Preliminaries}\label{sec:preliminaries}
The reader is assumed to be familiar with Petri nets (see ~\cite{ARMura89,ICSilv93b} for a gentle introduction). The aim of this section is to fix the notation and to recall the required material.

\emph{Nets and Net Systems.}
We denote a Petri Net (PN) as $\N=\langle P, T, \b{Pre}, \b{Post} \rangle$, where $P$ and $T$ are two non-empty and disjoint sets of \emph{places} and \emph{transitions}, and $\b{Pre}, \b{Post} \in \nat^{|P| \times|T|}_{\geq 0}$ are pre and post \emph{incidence matrices}. For instance, $\b{Post}[p,t]=w$ means that there is an \emph{arc} from $t$ to $p$ with \emph{weight} (or multiplicity) $w$. When all arc weights are one, the net is \emph{ordinary}. For pre and postsets we use the conventional dot notation, e.g., $\preset t = \{ p \in P | \b{Pre}[p,t] \neq 0\}$. If $\N'$ is the subnet of $\N$ defined by $P' \subset P$ and $T'  \subset  T$, then $\b{Pre'} =\b{Pre}[P',T']$ and $\b{Post'} =\b{Post} [P', T']$.

A \emph{marking} is a $|P|$ sized, natural valued vector. A Petri Net system is a pair  $\Sigma = \langle \N, \b{m}_0 \rangle$, where $ \b{m}_0$ is the \emph{initial marking}. A transition $t$ is \emph{enabled} at a given marking $\b{m}$ if $\b{m} \geq \b{Pre}[P, t]$; its firing yields a new marking $\b{m}' = \b{m} + \b{C}[P, t]$, where $\b{C}= \b{Post}-\b{Pre}$ is the token-flow matrix of the net. This fact is denoted by $\b{m} \xrightarrow{t} \b{m}'$. An \emph{occurrence sequence} from $\b{m}$ is a sequence of transitions $\sigma= t_1, \ldots, t_k, \ldots$ such that $\b{m} \xrightarrow{t_1} \b{m}_1 \ldots \b{m}_{k-1} \xrightarrow{t_k}\ldots$. The set of all reachable markings, or \emph{reachability set}, from $\b{m}$, is denoted as $RS(\N, \b{m})$. \ignore{The reachability relation is conventionally represented by a \emph{reachability graph} $RG(\N, \b{m})$ where the nodes are the reachable markings and there is an arc labeled $t$ from node $\b{m}'$ to $\b{m}''$ if $\b{m}' \xrightarrow{t} \b{m}''$.}

\emph{State (transition) equation.}
Denoting by $\b \sigma \in \nat^{|T|\times 1}_{\geq 0}$ the \emph{firing count vector} of $\sigma$, where $\b \sigma[t_i]$ is the number of times that $t_i$ appears in $\sigma$. Given $\sigma$ such that $\b{m} \xrightarrow{\sigma} \b{m}'$, then $$\b{m}' = \b{m} + \b{C} \cdot \b \sigma.$$ This is known as the \emph{state transition equation} of $\Sigma$. \ignore{Nevertheless, not necessarily a vector that satisfies the state equation is an actually reachable marking, because the state equation does not check fireability of a sequence with firing count vector $\b{\sigma}$. Such markings vectors are called \emph{spurious markings} ~\cite{ArSiTeCo98}.}

\emph{Place Markings Bounds and Structural Boundedness.}
The marking bound of a place p in a net system $\Sigma$ is defined as: $\b b[p]=max\{\b m[p]\ |\ \b m \in RS(\Sigma)\}$. When this bound is finite, the place is said to be bounded. A net is structurally bounded if every place is bounded for every initial marking.   

\emph{Liveness, Structural Liveness and Deadlocks.}
Liveness is a property related to the potential fireability in all reachable markings. A transition is live if it is potentially fireable in all reachable markings. In other words, a transition is live if it never losses the possibility of firing (i.e., of performing an activity). A transition $t$ is potentially fireable at $\b{m}$ if there exists a firing sequence $\sigma$ leading to a marking $\b{m}'$ in which $t$ is enabled, i.e., $\b{m}[\sigma \rangle \b{m}'\geq \b{Pre}[P,t]$.  A net system is live if all the transitions are live.

A net is structurally live if there exists at least one live initial marking. Non-liveness for arbitrary initial markings reflects a pathology of the net structure: structural non-liveness. In deadlock markings all transitions are dead, so none of them can be fired. A net system is said to be deadlock-free if at least one transition can be fired from any reachable marking. Liveness is a stronger condition than deadlock-freeness.

\emph{Siphons.}
In ordinary PNs, a siphon is a subset of places such that the set of its input transitions is contained in the set of its output transitions: $S \subseteq P$ is a siphon if $\preset {S} \subseteq \postset {S}$. A siphon is minimal if any subset of it is not a siphon. A \emph{bad siphon} is a siphon not containing any trap (set of places that in ordinary PN remain marked for all possible evolution if initially marked).

\ignore{ The following two properties are satisfied in \emph{ordinary} PNs:
\begin {enumerate}
\item If $\b{m}$ is a behavioral deadlock (i.e., dead-marking), in an ordinary net then $S = \{p\ |\ \b{m}[p]=0\}$ is an unmarked (empty) siphon.
\item If a siphon is (or becomes) unmarked, it will remain unmarked for any possible evolution. Therefore all its input and output transitions are dead. So the system is not live (but can be deadlock-free).
\end {enumerate}
}

\emph{T-semiflows, P-semiflow.}
T-semiflows are nonnegative right annullers of $\b C$. So a vector $\b x \gneq 0$ is a T-semiflow if $\b C \cdot \b x =0$. We denote by $||\b x||$ the support of the vector $\b x$ containing all non-null elements: $||\b x|| = \{i | \b x[i] \neq 0\}$. T-semiflow $\b x$ is said to be \emph{minimal} when no T-semiflow $\b{x}'$ exists such that $||\b {x}'|| \subset ||\b x||$. 
The P-semiflows are the nonnegative left annullers of $\b C$.  A vector $\b y \gneq 0$ is a P-semiflow if $\b y \cdot \b C =0$.

From the existence of P- or T-semiflows derives interesting information about the possible behaviors. If a P-semiflow $\b y > \b 0$ exists, $\N$ is \emph{conservative}. Conservativeness ensures structural boundedness. Moreover, $\N$ is \emph{consistent} if a T-semiflow $\b x > \b 0$ exists. A system that is live and bounded must be consistent because a marking repetitive sequence containing all the transitions corresponds to a positive T-semiflow.

\ignore{
\emph{Conflicts and Structural Conflicts.}
A \emph{conflict} is the situation when not all enabled transitions can occur at once. Formally, $t, t^\prime \in T$ are in conflict relation at marking $\b{m}$ if there exist $k, k^\prime \in \nat$ such that $\b{m}  \geq k \cdot \b{Pre}[P,t]$ and $\b{m}  \geq k^\prime \cdot \b{Pre}[P,t^\prime]$, but $\b{m} \ngeq k \cdot \b{Pre}[P,t] + k^\prime \cdot \b{Pre}[P,t^\prime]$. To fulfill the above condition it is necessary that $\preset t \cap \preset t^\prime \neq \emptyset$. When $\b{Pre}[P,t]= \b{Pre}[P,t^\prime] \neq 0$, t and $t^\prime$ are in \emph{equal conflict (EQ) relation}. This means that they are both enabled whenever one is. By defining that a transition is always in EQ with itself, this is an \emph{equivalence} relation on the set of transitions and each equivalence class is an \emph{equal conflict set} denoted, for a given $t$, $EQS(t)$. SEQS is the set of all the equal conflict sets of a given net.}

\emph{Implicit Places.} In general, places impose constraints on the firing of their output transition. When they never do it in isolation, they could be removed without affecting the behaviour of the rest of the system. These places are called implicit. Formally, let $\Sigma = \langle P \cup \{p\}, T, \b{Pre}, \b{Post}, \b{m}_0 \rangle$ be a PN system, the place $p$ is implicit if $\b m \geq \b {Pre}[P,t] \Rightarrow \b m[p] \geq \b {Pre}[p,t]$ for all $t \in \postset p$.

\begin{figure*}[h]
\begin{center}
\centering
\psfrag{t1}{$t_1$}\psfrag{t2}{$t_2$}\psfrag{t3}{$t_3$}\psfrag{t4}{$t_4$}
\psfrag{t5}{$t_5$}\psfrag{t6}{$t_6$}\psfrag{t7}{$t_7$}\psfrag{t8}{$t_8$}
\psfrag{t9}{$t_9$}\psfrag{t10}{$t_{10}$}\psfrag{t11}{$t_{11}$}
\psfrag{t12}{$t_{12}$}\psfrag{t13}{$t_{13}$}\psfrag{t14}{$t_{14}$}
\psfrag{p1}{$p_1$}\psfrag{p2}{$p_2$}\psfrag{p3}{$p_3$}\psfrag{p4}{$p_4$}
\psfrag{p5}{$p_5$}\psfrag{p6}{$p_6$}\psfrag{p7}{$p_7$}
\psfrag{p8}{$p_8$}\psfrag{p9}{$p_9$}
\psfrag{p10}{$p_{10}$}\psfrag{p11}{$p_{11}$}
\psfrag{b1}{$b_{1}$}\psfrag{b2}{$b_{2}$}
\psfrag{b3}{$b_{3}$}\psfrag{b4}{$b_{4}$}
\psfrag{b5}{$b_{5}$}\psfrag{b6}{$b_{6}$}
\psfrag{b7}{$b_{7}$}\psfrag{b8}{$b_{8}$}
\psfrag{N1}{$\N_1$}\psfrag{N2}{$\N_2$}\psfrag{N3}{$\N_3$}\psfrag{pm}{$p_m$}

   \centering \subfigure[]{\includegraphics[width=0.85\columnwidth]{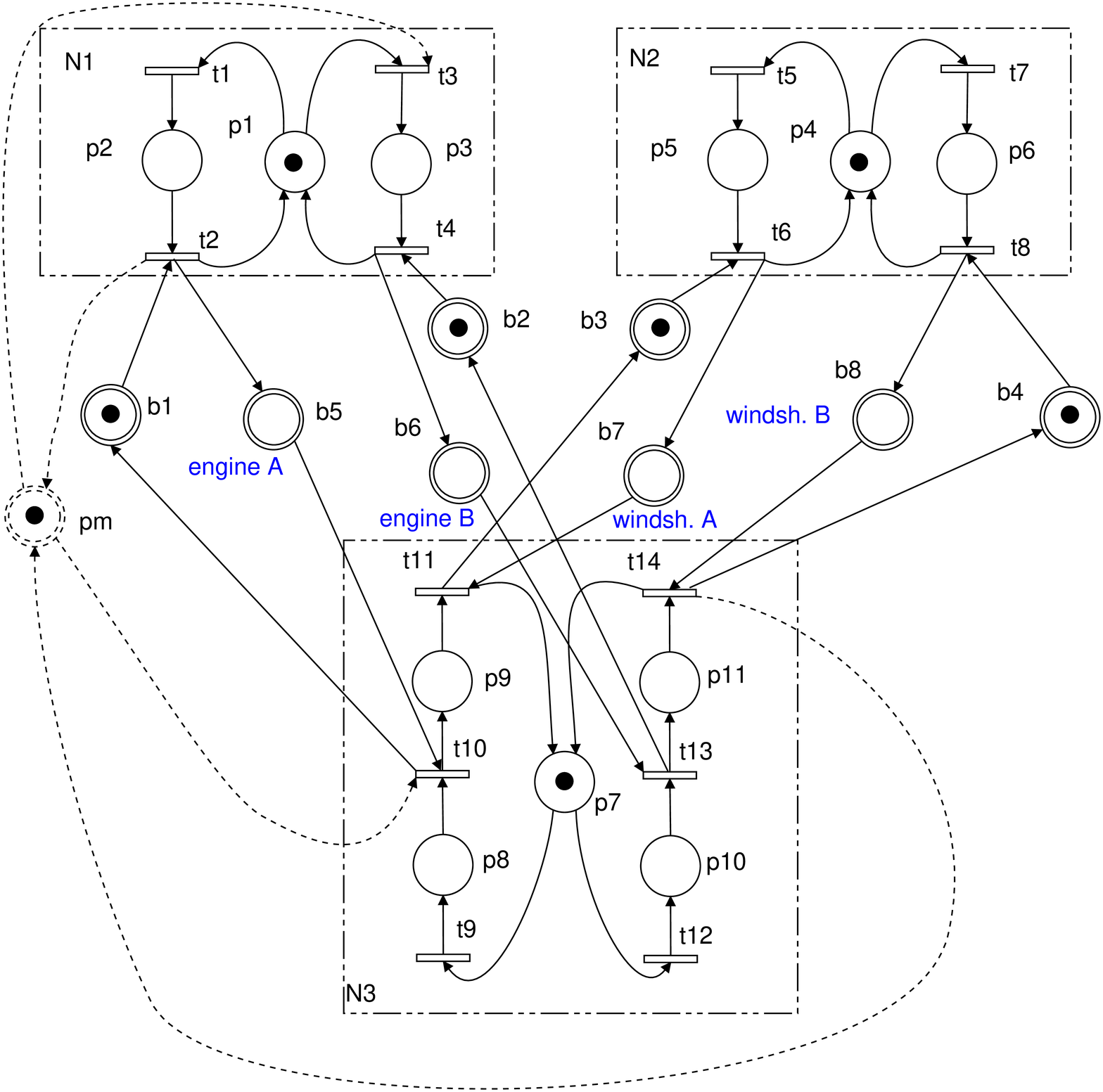}\label{fig:DSSPmot}}\hspace{0.01\textwidth}
	 \centering \subfigure[]{\includegraphics[width=0.89\columnwidth]{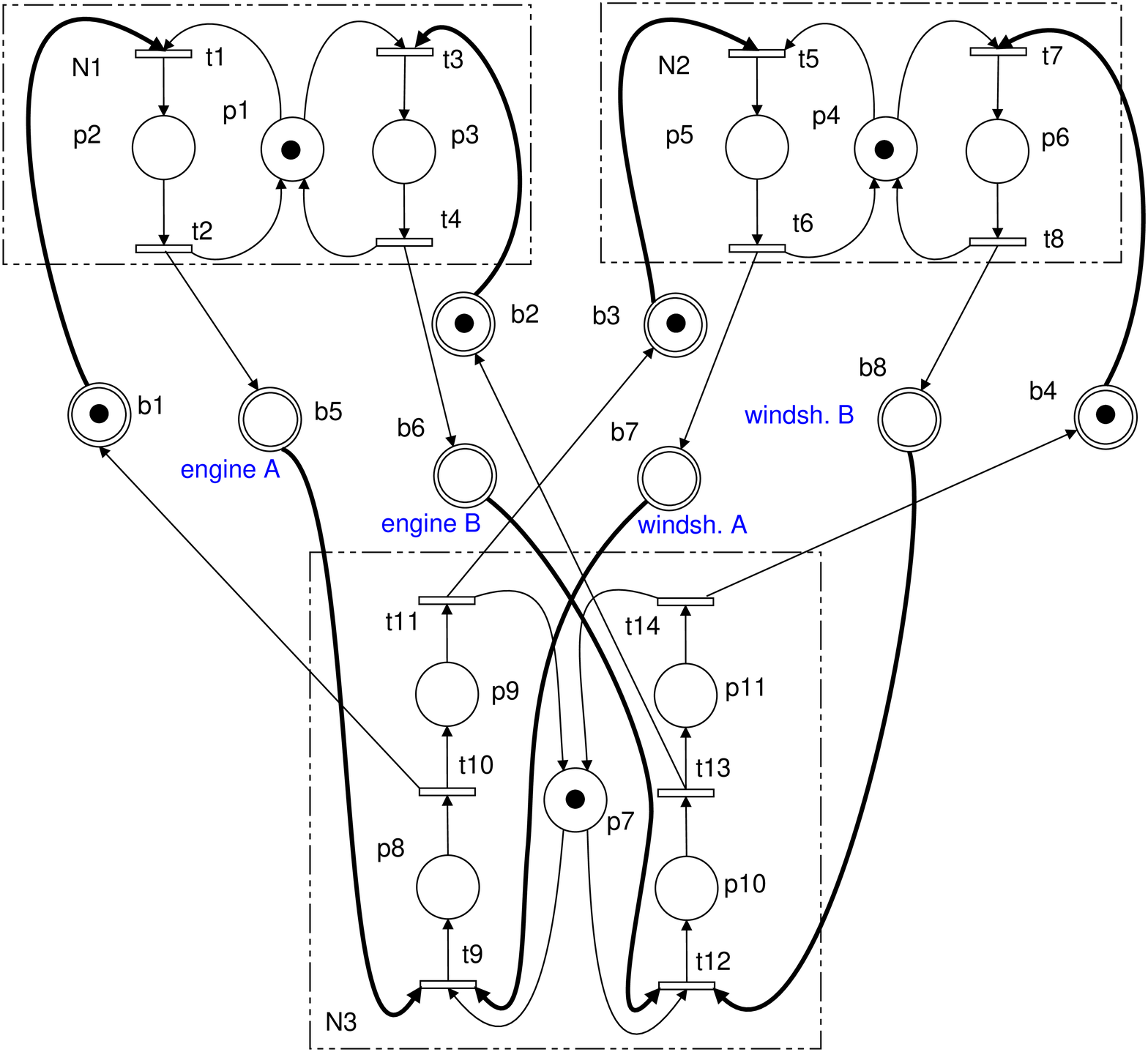}\label{fig:DSSPmotMon}}
\caption{\small Motivation example: (a) The not live SSP net modeling the system in Fig. \ref{fig:PracExam} with he monitor place $p_m$ that prevents the emptiness of the bad siphon $\{p_1,p_2,p_7,p_9,p_{10},p_{11},b_2,b_5\}$; (b) Live net obtained by applied the pre-assignment method in \cite{clavelCDC} to the SSP in Fig.~\ref{fig:first}.a.} \label{fig:first}
\end{center}
\end{figure*}

\emph{PN classes}. If each transition in a ordinary PN has exactly one input and one output place, the net is called \emph{State Machine}. Any 1-marked strongly connected State Machine is live and the maximum number of tokens in each place must be 1 for all reachable marking. A PN is an \emph{Choice Free} (CF) net if all places have at most one output transition. Finally, a PN is an \emph{Join Free} (JF) net if all transitions have at most one input place.

\begin{definition}\label{def:ssp}
Let $\mathcal{N}^s = \langle P,T,\b{Pre},\b{Post} \rangle$ be a PN net. The system $\langle \mathcal{N}^s,\b{m}_0\rangle$ is a \emph{Synchronized Sequential Processes} (SSP) if it can be decomposed into $n$ modules (called also agents) $\N_i^s = \langle P_i, T_i, \b{Pre}[P_i,T_i], \b{Post}[P_i,T_i] \rangle$ and a set of \emph{buffers} $B$ such that:
\begin{enumerate}
\item $P = P_1 \cup P_2 \cup \ldots \cup P_n \cup B$ and $P_i \cap P_j = \emptyset$ for all $i \neq j$ and $P_k \cap B = \emptyset$ for all $k=1,2, \ldots, n$.
\item $T = T_1 \cup T_2 \cup \ldots \cup T_n$ with $T_i \cap T_j = \emptyset$ for all $i \neq j$.
\item All $\N_i^s$ are strongly connected state machines.
\item All buffers in $B$ are destination private, i.e., for all buffers $b \in B$, if $\postset{b} \cap T_i \neq \emptyset$ then $\postset{b} \cap T_x = \emptyset$ for all $x = \{1,2, \ldots, n\} \setminus\{i\}$.
\item Each agent $i$ has only one marked place $p_i^e$, called \emph{waiting place}, such that all its cycles that have input buffers contain this place.
\item $\langle \N^s, \b{m}_0 \rangle$ is conservative and consistent.
\end{enumerate}
\end{definition}

Conditions 1) and 2) in Def. \ref{def:ssp} ensure that the set of places and transitions are partitioned into disjoint sets  (agents with their input buffers); condition 3) states that each agent is a strongly connected state machine  (thus locally consistent); condition 4) imposes that a buffer can only have output transitions in one agent (the destination agent). Conditions 5) considers only local cycles that have input buffers but remark that if a cycle without an input buffer has only output buffers the net cannot be conservative. Therefore, it is not possible to have cycles only with output buffers. Furthermore, local cycles without input and output buffers can be reduced to a transitions modeling local behaviours. Checking conditions in Def. \ref{def:ssp} can be done in  polynomial time. 

Notice that the class of systems of Def. \ref{def:ssp} is inspired from the DSSP \cite{SilvaDSSP}. On One side, Def. \ref{def:ssp} relaxes the assumption of DSSP that the buffer do not condition the choices in an agent. Another difference of Def. \ref{def:ssp} with DSSP is the constraint that represent the existence of a waiting place (condition 5)). Finally, condition 6) considers conservative and consistent SSP that are necessary conditions for structural liveness in structurally bounded PNs.

\begin{example}\label{example1}  The PN in Fig.~\ref{fig:DSSPmot} (without place $p_m$ and its input and output arcs) is a SSP (also a DSSP) modeling the production system represented schematically in Fig. \ref{fig:PracExam}. Agent 1, modeled by $\N_1$ is creating the engines ($t_1 \rightarrow p_2 \rightarrow t_2$ for car model A and $t_3 \rightarrow p_3 \rightarrow t_4$ for car model B). Agent 2, modeled by $\N_2$ is producing windshields ($t_5 \rightarrow p_5 \rightarrow t_6$ for car model A and $t_7 \rightarrow p_6 \rightarrow t_8$ for car model B). For simplicity, we assume that there exist raw material to produce only one type of intermediate product and a new one can be produced once it is consumed by agent 3. Agent 3 is producing the cars of type A (firing $t_9$) or type B (firing $t_{12}$). Notice that the type of the car is not chosen by resource availability but it is an outside decision (for example by a client). For this reason, in DSSP the buffers never constraint the internal choices. However, this restriction is removed in SSP since it is necessary to be violated for structural live enforcement. Agent 3 is producing first the bodywork ($t_9$ for type A and $t_{12}$ for type B), then is assembling the engine ($t_{10}$ or $t_{13}$ depending on the model) and finally the windshield ($t_{11}$ or $t_{14}$).

Each agent has two local T-semiflows. In particular, the T-semiflows of $\N_1$ are $\b{x}_{1}=t_1+t_2$ and $\b{x}_{2}=t_3+t_4$; the ones of $\N_2$  are $\b{x}_{3}=t_5+t_6$ and $\b{x}_{4}=t_7+t_8$; while the T-semiflows of $\N_3$ are $\b{x}_{5}=t_9+t_{10}+t_{11}$ and $\b{x}_{6}=t_{12}+t_{13}+t_{14}$. Notice that for sake of brevity, the multi-set notation for vectors is used. There exist four pairs of buffers $(b_{1}, b_ {5})$, $(b_{2}, b_{6})$, $(b_{3}, b_{7})$ and $(b_{4}, b_{8})$ in consumption - production relation. Finally, the waiting places are $p_1$, $p_4$ and $p_7$.

\ignore{The set of all equal conflict sets is: $SEQS = \{ \{t_1, t_3\},$ $\{t_2\},$ $\{t_4\},$ $\{t_5, t_7\},$ $\{ t_6\},$ $\{t_8\},$ $\{t_9, t_{12}\},$ $\{t_{10}\},$ $\{t_{11}\},$ $\{t_{13}\},$ $\{t_{14}\} \}$.
$\N^s$ does not fulfill the rank theorem: $ \mid SEQS\mid - 1 = 10 \neq rank (\b{C}) = 11$.}

This net is structurally not live and for the marking in Fig.~\ref{fig:DSSPmot}, by firing the sequence $t_3t_4t_5t_6t_3t_5t_9$, the SSP will reach the livelock (deadlock in this case) marking $\b{m}'=p_3+p_5+p_8+b_{1}+b_{4}+b_{6}+b_{7}$.
\end{example}

\section{On liveness of SSP nets}\label{sec:dssp}

\begin{figure*}[ht]
   \begin{center}
   \centering
	\psfrag{t1}{$t_1$}\psfrag{t2}{$t_2$}\psfrag{t3}{$t_3$}\psfrag{t4}{$t_4$}
\psfrag{t5}{$t_5$}\psfrag{t6}{$t_6$}\psfrag{t7}{$t_7$}\psfrag{t8}{$t_8$}
\psfrag{t9}{$t_9$}\psfrag{t10}{$t_{10}$}\psfrag{t11}{$t_{11}$}\psfrag{t12}{$t_{12}$}
\psfrag{t13}{$t_{13}$}\psfrag{t14}{$t_{14}$}
\psfrag{t15}{$t_{15}$}
\psfrag{p1}{$p_1$}\psfrag{p2}{$p_2$}
\psfrag{p3}{$p_3$}\psfrag{p4}{$p_4$}
\psfrag{p5}{$p_5$}\psfrag{p6}{$p_6$}
\psfrag{p7}{$p_7$}\psfrag{p8}{$p_8$}
\psfrag{p9}{$p_9$}\psfrag{p10}{$p_{10}$}
\psfrag{p11}{$p_{11}$}
\psfrag{b1}{$b_1$}\psfrag{b2}{$b_2$}
\psfrag{b3}{$b_3$}\psfrag{b4}{$b_4$}\psfrag{b5}{$b_5$}
\psfrag{N1}{$\N_1$}\psfrag{N2}{$\N_2$}
    \centering \subfigure[]{\includegraphics[width=0.6\columnwidth]{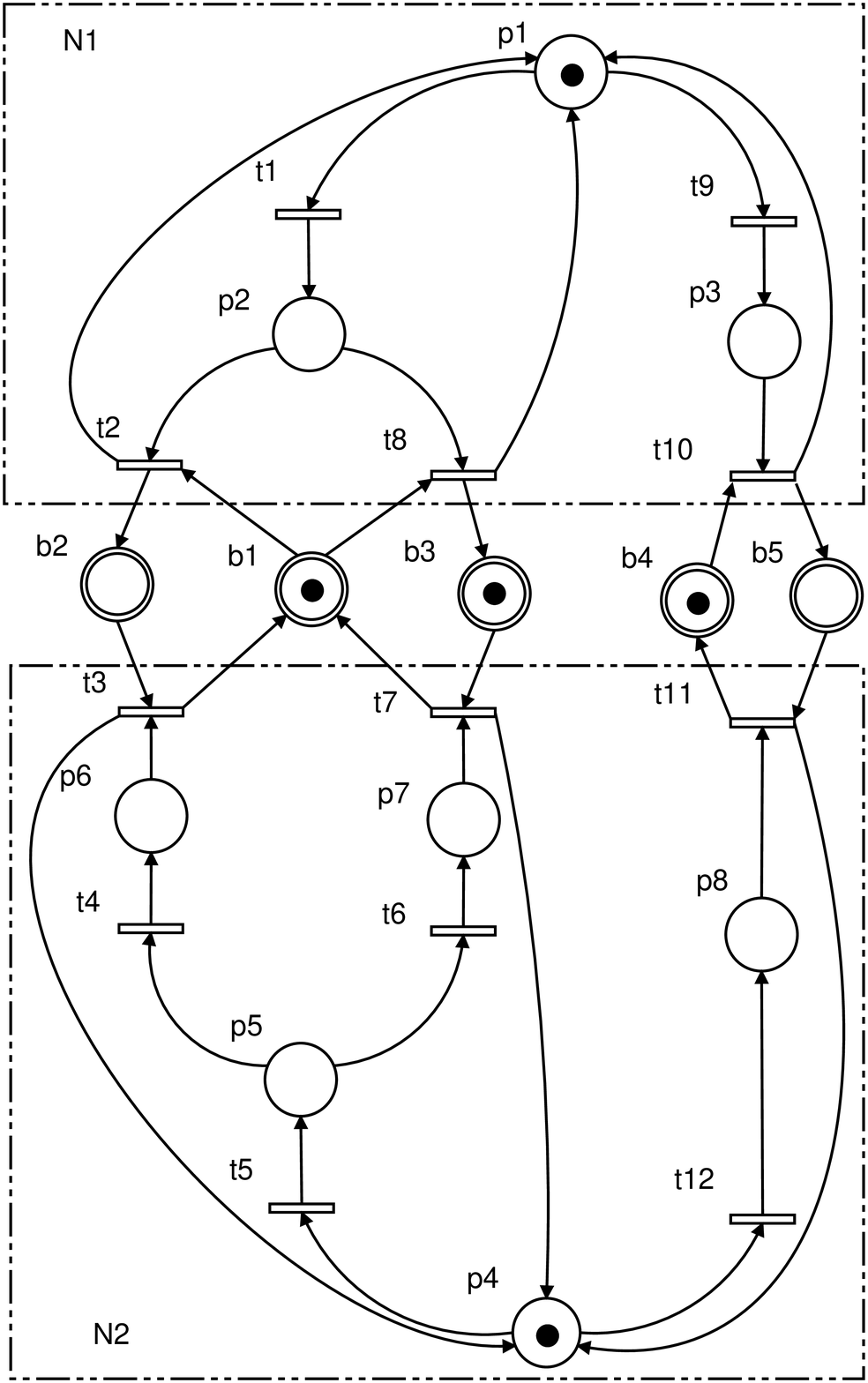}\label{fig:DSSPpb1}}\hspace{0.15\textwidth}
    \centering \subfigure[]{\includegraphics[width=0.6\columnwidth]{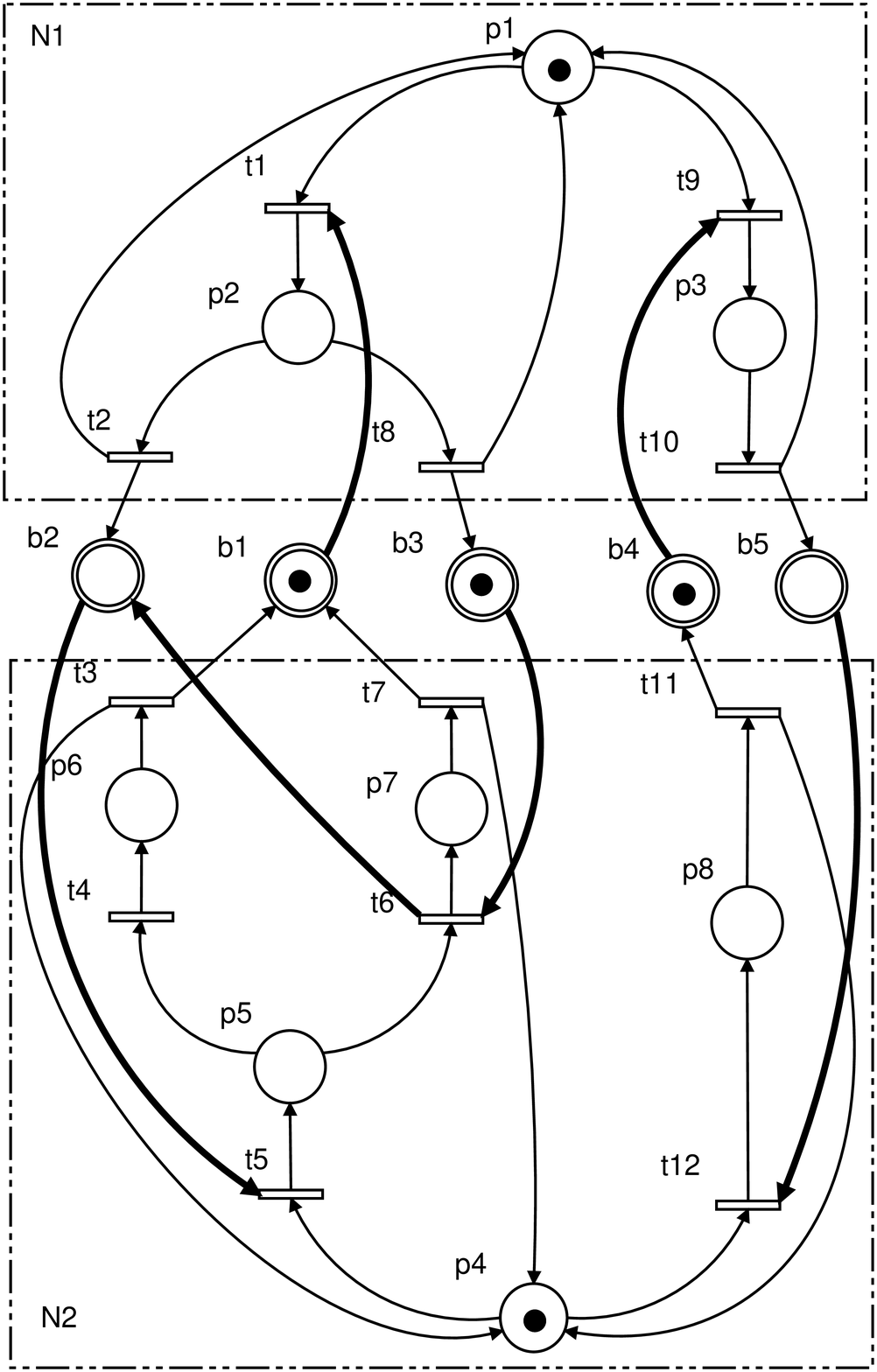}\label{fig:SSPPb1}}
\caption{Motivation example of problem 1: (a) A non live SSP net ($\N^s$); (b) The resulting non live net after applying the pre-assignment method in \cite{clavelCDC}. Observe that the number of buffers is not changing ($\{b_1, b_2, b_3, b_4, b_5\}$). Moreover, all the post-conditions remain unchanged, while transformations concern the pre-incidence, arc flowing from the buffers.} \label{fig:Pb1}
\end{center}
\end{figure*}

In this section, the main limitations of previous works (controlling bad siphons~\cite{ezpeleta,park2001deadlock,Colom:2003,li2004elementary,cano2012,7870672} and the method in \cite{clavelCDC}) for the liveness enforcement  of SSP net systems are considered. Moreover, some intuitions behind the approach proposed in this paper are provided.

\subsection{Controlling bad siphons in SSP systems} \label{siphonbased}
A well-known method for liveness enforcement of \emph{ordinary nets} consists in controlling the \emph{bad siphons}. First, the set of \emph{bad siphons} is computed and then they are prevented to become empty by using, for example, \emph{monitor} places (a kind of generalized mutual exclusion constraint~\cite{BOIoAn06,GMEC,ARLuWuZhSh20}). However, this strategy applied to SSP results in new buffers (the monitor places can be seen as new buffers) that may have output transitions in more than one agent. Therefore, the system loses the distributiveness property since the buffers are not anymore destination private (condition 4 of Def. \ref{def:ssp}). The place $p_m$ in Fig.\ref{fig:DSSPmot} prevents the emptiness of the bad siphon $S_1=\{p_1,p_2,p_7,p_9,p_{10},p_{11},b_2,b_5\}$. The new place $p_m$ can be seen as a new buffer, and it provides tokens to both agents $\N_1$ and $\N_3$. Moreover, the method of controlling bad siphons is well understood in ordinary Petri nets. However, SSPs are not necessarily ordinary. 

Additionally, deadlocks and circular waiting of resources is well understood for the special classes of RAS called S4PR \cite{IPTrGaCoEz05}. Nevertheless, the class of SSP is not comparable with the class of S4PR from RAS. The S4PR class is also composed by a set of state machine PNs connected through shared resources, hence something similar to the buffers. However, (i) In S4PR, resource places (corresponding to buffers in SSP) are not destination private; (ii) in S4PR there exists one P-semiflow for each resource place containing only that resource place while in SSP the P-semiflows may contain in general more than one buffer and (iii) in SSP the state machines contains only one token while in S4PR could exists more than one token in the idle place. Mainly because of (ii), the well know relation between circular waiting of resources and deadlock cannot be used in SSP. On the other hand, mainly because of (iii) the method presented here is difficult to be applied to S4PR or S3PR since in SSP only one token exists and in an agent, at the initial marking when some local T-semiflows can start firing only one can start in purely non-deterministic way.

Based on the previous reasons, new approaches for liveness enforcement in SSP should be taken into account. One such approach has been previously developed and it is briefly presented in subsection \ref{preinCDC}. However, it can be applied only to a reduced class of SSP. In subsection \ref{ss:controlpn} a new approach for more general SSP is introduced.

\subsection{Buffer pre-assignment in SSP}\label{preinCDC}
In~\cite{clavelCDC} an approach based on the pre-assignment of the buffers at the transitions in EQ relation\footnote{Two transitions $t$ and $t^\prime$ that are in conflict, i.e., $\preset t \cap \preset t^\prime \neq \emptyset$, are in EQ relation if $\b{Pre}[\cdot,t]= \b{Pre}[\cdot,t^\prime]$.} has been presented. The main idea of this method is to ensure that when a conflict transition is fired, at least one local T-semiflow that contains the corresponding conflict transition can be fired completely. 

\begin{example}
Fig.~\ref{fig:DSSPmotMon} shows the live net system obtained by applying the pre-assignment approach proposed in~\cite{clavelCDC} to the non-live SSP net $\N^s$ in Fig.~\ref{fig:DSSPmot} without place $p_m$. In $\N^s$, if transition $t_3$ is fired and $b_2$ is empty, agent $\N_1$ is blocked until $b_2$ receives a token. However, in Fig.~\ref{fig:DSSPmotMon} $t_3$ can be fired only if $b_2$ has a token because buffer $b_2$ has been preassigned from $t_4$ to $t_3$. Notice that in Fig. \ref{fig:DSSPmotMon} all the other buffers have been preassigned, i.e., $b_1$ is preassigned to $t_1$, $b_3$ to $t_5$, $b_4$ to $t_7$, \{$b_5$,$b_7$\} to $t_9$ and \{$b_6$,$b_8$\} to $t_{12}$.
\end{example}

Nevertheless, this method is not working for more general SSP structures due to the complex relations that may appear between local and global T-semiflows. In the following the two main problems (\textbf{Pb. 1} and \textbf{Pb. 2}) of the \emph{pre-assignment} method in \cite{clavelCDC} are stated and illustrated. Moreover, is given intuitions on how these problems will be approached in the new method presented in this paper. 

\textbf{Pb. 1.} If SSP net $\N^s$ has non disjoint global T-semiflows, it could happen that after applying the pre-assignment of buffers, the firing of a local T-semiflow is conditioned by the marking of a buffer that, in $\N^s$, was not its input buffer. This situation is changing the given plan and most frequently it makes the system too restrictive and possible non-live.  
  
\begin{example}\label{ex:Pb1}
Fig.~\ref{fig:DSSPpb1} shows a net $\N^s$ while Fig.~\ref {fig:SSPPb1} is the resulted net after applying the pre-assignment in~\cite{clavelCDC}. The global and local T-semiflows of $\N^s$ are given in Tab. \ref{table:TSFpb1} from where it is possible to check that $\b x_1$ and $\b x_2$ are not disjoint global T-semiflows ($\b x_1 \cap \b x_2=\{t_1,t_5\}$). 

\begin{table} [htbp]
\caption{Local and Global T-semiflows of $\N^s$ in Fig.~\ref{fig:DSSPpb1}}
\label{table:TSFpb1}
\begin{center}
\begin{tabular}{|c|c|c|c|c|}
\hline
Id. & Net & Type & Transitions & Global \\
\hline \hline
$\b x_1$ & $\N^s$ & Global & $t_1$-$t_5$ & - \\ \hline
$\b x_2$ & $\N^s$ & Global & $t_1$,$t_5$-$t_{8}$ & - \\ \hline
$\b x_3$ & $\N^s$ & Global & $t_9$-$t_{12}$ & - \\ \hline
$\b x_4$ & $\N_1^s$ & Local & $t_1$,$t_{2}$ & $\b x_1$  \\ \hline
$\b x_5$ & $\N_1^s$ & Local & $t_{1}$,$t_{8}$ & $\b x_2$  \\ \hline
$\b x_6$ & $\N_1^s$ & Local & $t_9$,$t_{10}$ & $\b x_3$  \\ \hline
$\b x_7$ & $\N_2^s$ & Local & $t_3$-$t_5$ & $\b x_1$  \\ \hline
$\b x_8$ & $\N_2^s$ & Local & $t_{5}$-$t_{7}$ & $\b x_2$  \\ \hline
$\b x_9$ & $\N_2^s$ & Local & $t_{11}$,$t_{12}$ & $\b x_3$  \\ \hline
\end{tabular}
\end{center}
\end{table}

In the resulted net in Fig. \ref{fig:SSPPb1}, $b_2$ has been preassigned to $t_5$, so the local T-semiflow $\b x_7=t_5+t_4+t_3$ can only start its firing when $b_2$ has a token. However, this pre-assignment is also conditioning the firing of the local T-semiflow $\b x_8=t_5+t_6+t_7$ since $t_5$ belongs to both $\b x_7$ and $\b x_8$. So, in the net in Fig. \ref{fig:SSPPb1}, the firing of the local T-semiflow $\b x_8$ is conditioned by $b_2$ that in $\N^s$ (Fig. \ref{fig:DSSPpb1}) was not its input buffer. This leads to block the execution of the global T-semiflows $\b x_1$ and $\b x_2$. Particularly, from the initial marking $\b m_0=p_1+p_4+b_1+b_3+b_4$ and by firing the sequence $t_1t_8$, $\b m_1=p_1+p_4+2\cdot b_3+b_4$ is reached. From this marking, $\b x_1$ and $\b x_2$ cannot fire anymore. 
\end{example}

In the approach proposed in this paper, in order to avoid \textbf{Pb 1}, a control PN in which the local T-semiflows are made disjoint is obtained. In particular, for each local T-semiflow in $\N^s$, a sequence $t_a \rightarrow p \rightarrow t_b$ is added in the control PN. In this way, the global T-semiflow are made disjoint and after the pre-assignment of the buffers in the control PN, the firing of a local T-semiflow only is conditioned by the buffers that in $\N^s$ were its input buffers.

\textbf{Pb. 2.} In the SSP net a buffer has to choose between the firing of different local T-semiflows that subsequently require a synchronization. Only considering the pre-assignment method, the net could remain non live. 

\begin{example}\label{ex:Pb2}
Fig.~\ref{fig:DSSPpb2} shows a $\N^s$ where the pre-assignment method in~\cite{clavelCDC} does not work due to \textbf{Pb. 2}. This $\N^s$ has a global T-semiflow $\b x_1$ composed by 3 local T-semiflows: $\b x_2=t_1+t_2$, $\b x_3=t_3+t_4$ and $\b x_4=t_5+t_6$. The firing of $\b x_2$($\b x_3$) consumes a resource from $b_1$ and produces a resource in $b_2$($b_3$). The firing of $\b x_4$ consumes a resource from $b_2$ and one from $b_3$ and produces two resources in $b_1$. So, in the long term (depending on the marking of the buffers), $\b x_2$, $\b x_3$ and $\b x_4$ should be fired proportionally. 
\begin{figure}[ht]
   \begin{center}
   \centering
	\psfrag{t1}{$t_1$}\psfrag{t2}{$t_2$}\psfrag{t3}{$t_3$}\psfrag{t4}{$t_4$}
\psfrag{t5}{$t_5$}\psfrag{t6}{$t_6$}\psfrag{t7}{$t_7$}\psfrag{t8}{$t_8$}
\psfrag{t9}{$t_9$}\psfrag{t10}{$t_{10}$}\psfrag{t11}{$t_{11}$}\psfrag{t12}{$t_{12}$}
\psfrag{t13}{$t_{13}$}\psfrag{t14}{$t_{14}$}
\psfrag{t15}{$t_{15}$}
\psfrag{p1}{$p_1$}\psfrag{p2}{$p_2$}
\psfrag{p3}{$p_3$}\psfrag{p4}{$p_4$}
\psfrag{p5}{$p_5$}\psfrag{p6}{$p_6$}
\psfrag{p7}{$p_7$}\psfrag{p8}{$p_8$}
\psfrag{p9}{$p_9$}\psfrag{p10}{$p_{10}$}
\psfrag{p11}{$p_{11}$}
\psfrag{b1}{$b_1$}\psfrag{b2}{$b_2$}
\psfrag{b3}{$b_3$}\psfrag{b4}{$b_4$}
\psfrag{N1}{$\N_1$}\psfrag{N2}{$\N_2$}
    \centering \subfigure[]{\includegraphics[width=0.45\columnwidth]{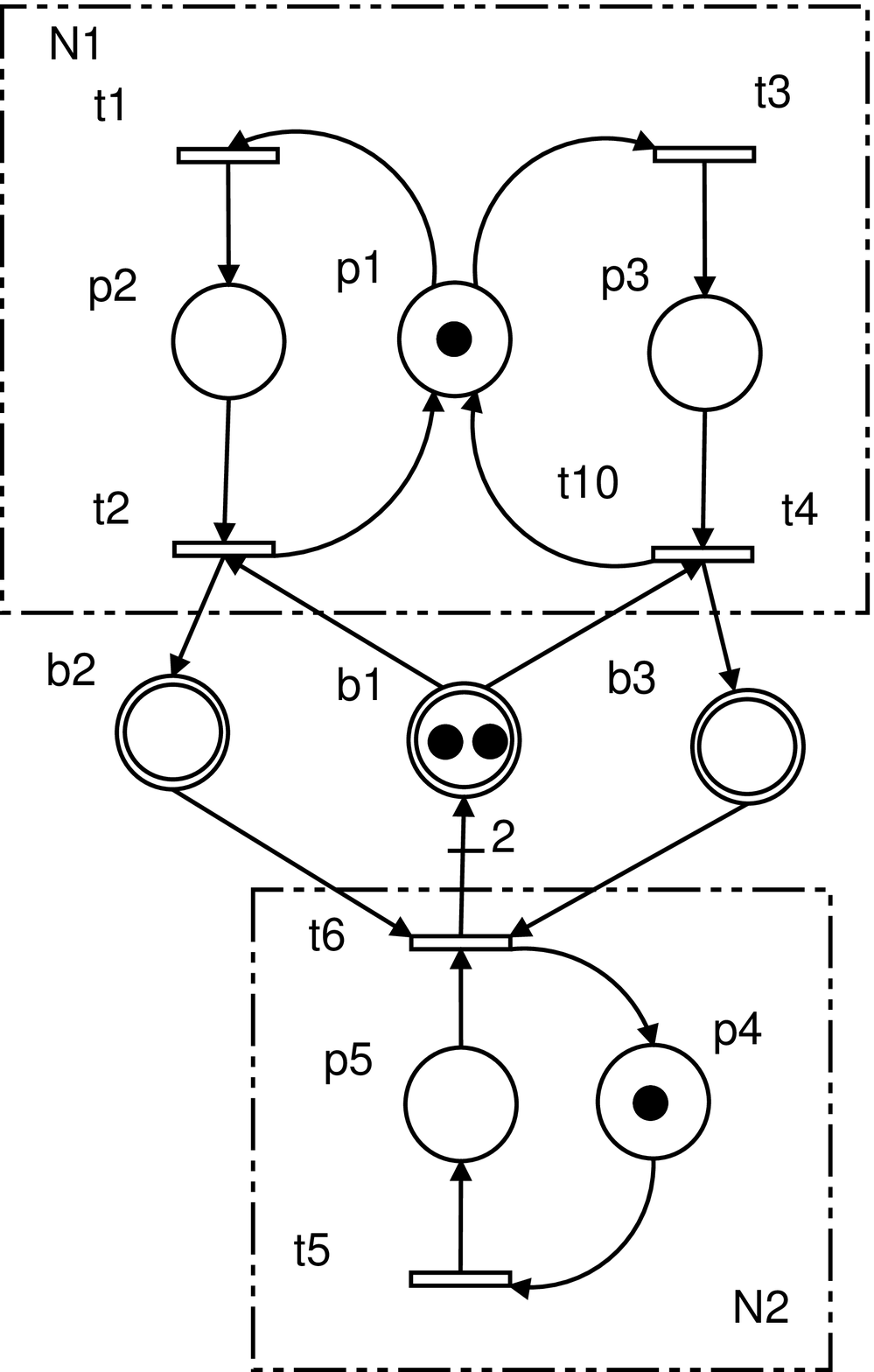}\label{fig:DSSPpb2}}
    \centering \subfigure[]{\includegraphics[width=0.45\columnwidth]{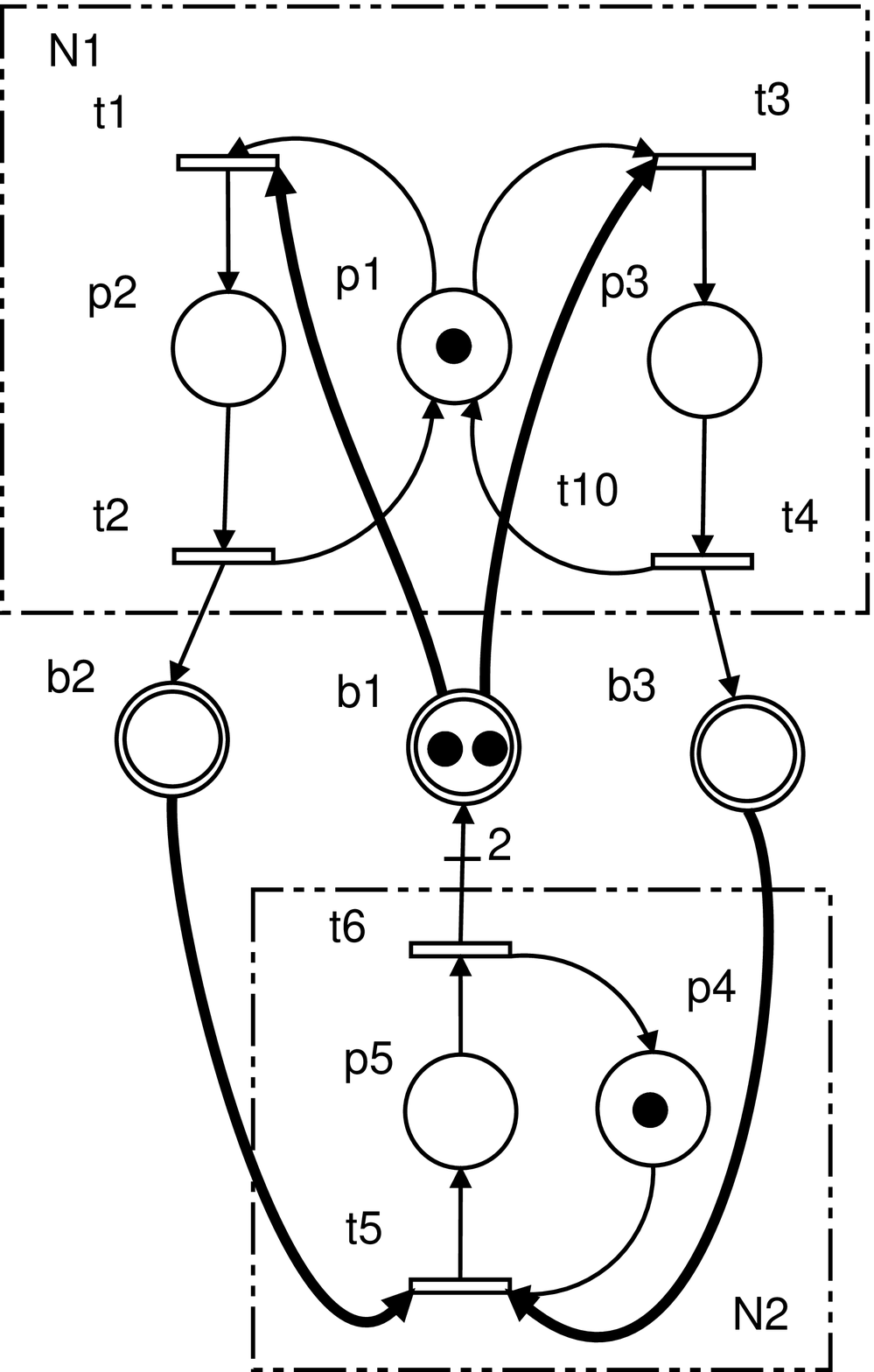}\label{fig:SSPpb2}}
\caption{Motivation example of problem 2: (a) A non live $\N^s$. (b) The resulting non live net after applying the pre-assignment method in \cite{clavelCDC}} \label{fig:Pb2}
\end{center}
\end{figure}

After applying the pre-assignment method to $\N^s$ in Fig.~\ref{fig:DSSPpb2}, the non-live net in Fig.~\ref{fig:SSPpb2} is obtained. Here it can be seen that $\b{x}_2$ or $\b{x}_3$ can be fired twice without a firing of $\b{x}_4$. Consequently, the resulting net is not live. 
\end{example}
In order to overcome \textbf{Pb. 2} in the new liveness enforcement approach, new buffers are included in the control PN, seen as \emph{information buffers}. The main objective of these new buffers is to force some global T-semiflows to fire their local ones in the correct proportion. These new buffers will only have output transitions in one local T-semiflow keeping the distributed property of the system.

\subsection{Structural Liveness enforcement through a control PN}\label{ss:controlpn}
Let us approach a new method with the idea of \emph{buffer pre-assignment} as the starting point.
However, unlike in~\cite{clavelCDC}, the pre-assignment here is not performed in $\N^s$, but in an \emph{control PN} denoted as $\N^c$. This $\N^c$ is obtained from $\N^s$ and has a predefined type of structure in which the local T-semiflows of $\N^c$ are disjoint. The main advantage of performing the pre-assigment in $\N^c$ with disjoint local T-semiflows is the fact that their will only be conditioned by buffers that in $\N^s$ were its input buffers. In this way, \textbf{Pb 1} is prevented.
Moreover, new buffers are included (seen as information buffers) forcing some global T-semiflows to fire their local ones in the correct proportion. These new buffers ensure the liveness of $\N^c$ preventing \textbf{Pb. 2}. 
\begin{figure}[ht]
\psfrag{Nc}{$\N^c$}
\psfrag{Nd}{$\N^s$}
   \begin{center}
   \includegraphics[width=1\columnwidth]{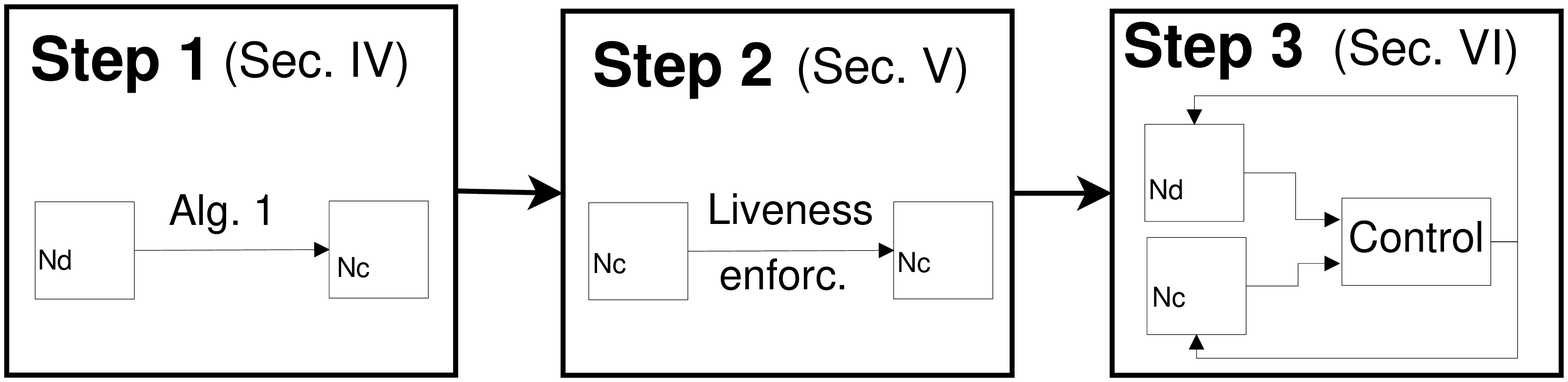}
\caption{\small Overview of the liveness enforcement methodology} \label{fig:StepsNdNc}
\end{center}
\end{figure} 

From a high level perspective (Fig. \ref{fig:StepsNdNc}), the proposed  methodology consists of:  

\begin{itemize}
\item \textbf{Step 1:} \emph{Compute the control PN $\N^c$} with a predefined structure and modeling the consumption/production relation between the buffers and the local T-semiflows. Sec.~\ref{sec:alg} explains how $\N^c$ is obtained by Alg.~\ref{alg:2}.

\item \textbf{Step 2:} \emph{Ensure the liveness of $\N^c$}. The control net $\N^c$ must be live. In Sec.\ref{sec:live} the possible structures of $\N^c$ obtained after applying Alg.~\ref{alg:2} are characterized. In some of them, the liveness holds while in others the liveness is forced by adding some control places (Alg.\ref{conpla}). 

\item \textbf{Step 3:} \emph{Control policy and systems evolution}. $\N^c$ will evolve synchronously with $\N^s$ disabling transitions that may lead the system to livelock. The methodology and behavior is described in Sec.~\ref{sec:contr}. \ignore{, basically consisting in, 

\begin{itemize}
\item \textbf{3.1} \emph{Label the transitions of the $\N^c$} with names of transitions of the $\N^s$. These common labels allow the firing of transitions in the $\N^c$ synchronously with some transitions of the $\N^s$.

\item \textbf{3.2} \emph{Assign guard expressions} to transitions of $\N^s$. These guard expressions are logical conditions based on the marking of $\N^c$ and will disable the firing of the transitions that may lead to a livelock. 

\item \textbf{3.3} Events from $\N^s$ are input signals to $\N^c$.
\end{itemize}}
\end{itemize}

\section{Construction of the control PN}\label{sec:alg}
\ignore{In Section \ref{preinCDC} we showed that buffer pre-assignment in a consistent and conservative SSP net $\N^s$ with non-disjoint local T-semiflows (\textbf{Pb. 1}), could result in non live systems.} This section presents an algorithm to compute the control PN denoted $\N^c$ for a given  structurally non live SSP $\mathcal{N}^s$. Subsequently and by means of guard expressions, the firing of local T-semiflows in $\N^s$ will be conditioned by the state (marking) in $\N^c$. Both net systems will evolve synchronously.

\ignore{
$\N^c$ is obtained from $\N^s$ and models the consumption/production relation that exists between the buffers and local T-semifows of $\N^s$. Moreover, $\N^c$ has the same number of agents and the same buffers of $\N^s$. However, each local T-semiflow in $\N^s$ is modeled by an ordinary sequence $t_a\rightarrow p\rightarrow t_b$ in $\N^c$. Where $t_a$ ($t_b$) models the first (last) transition of the local T-semiflow (defined in the following).

Before giving the methodology to obtain the control PN ($\N^c$) from a SSP structure ($\N^s$), let us state some concepts.
For computing $\N^c$ it is necessary that all agents composing $\N^s$ have a waiting place $p_e$ (Def.~\ref{def:wait}) from which the first and last transition of each local T-semiflow can be defined. The existence of a waiting place for each agent ensures that every cycle (local T-semiflow) contains a common input/output place ($p_e$) such that $p_e \in \postset{||\b x^i_n||} \cap \preset{||\b x^i_n||}$ for all T-semiflows $\b x^i_n$ of $\N_i$.}

\ignore{
\begin{figure}
\psfrag{t1}{$t_1$}\psfrag{t3}{$t_3$}\psfrag{t2}{$t_2$}\psfrag{t4}{$t_4$}
\psfrag{t6}{$t_6$}\psfrag{t5}{$t_5$}\psfrag{t7}{$t_7$}
\psfrag{p5}{$p_5$}\psfrag{p4}{$p_4$}\psfrag{p3}{$p_3$}\psfrag{p2}{$p_2$}
\psfrag{p1}{$p_1$}
   \begin{center}
   \includegraphics[width=0.8\columnwidth]{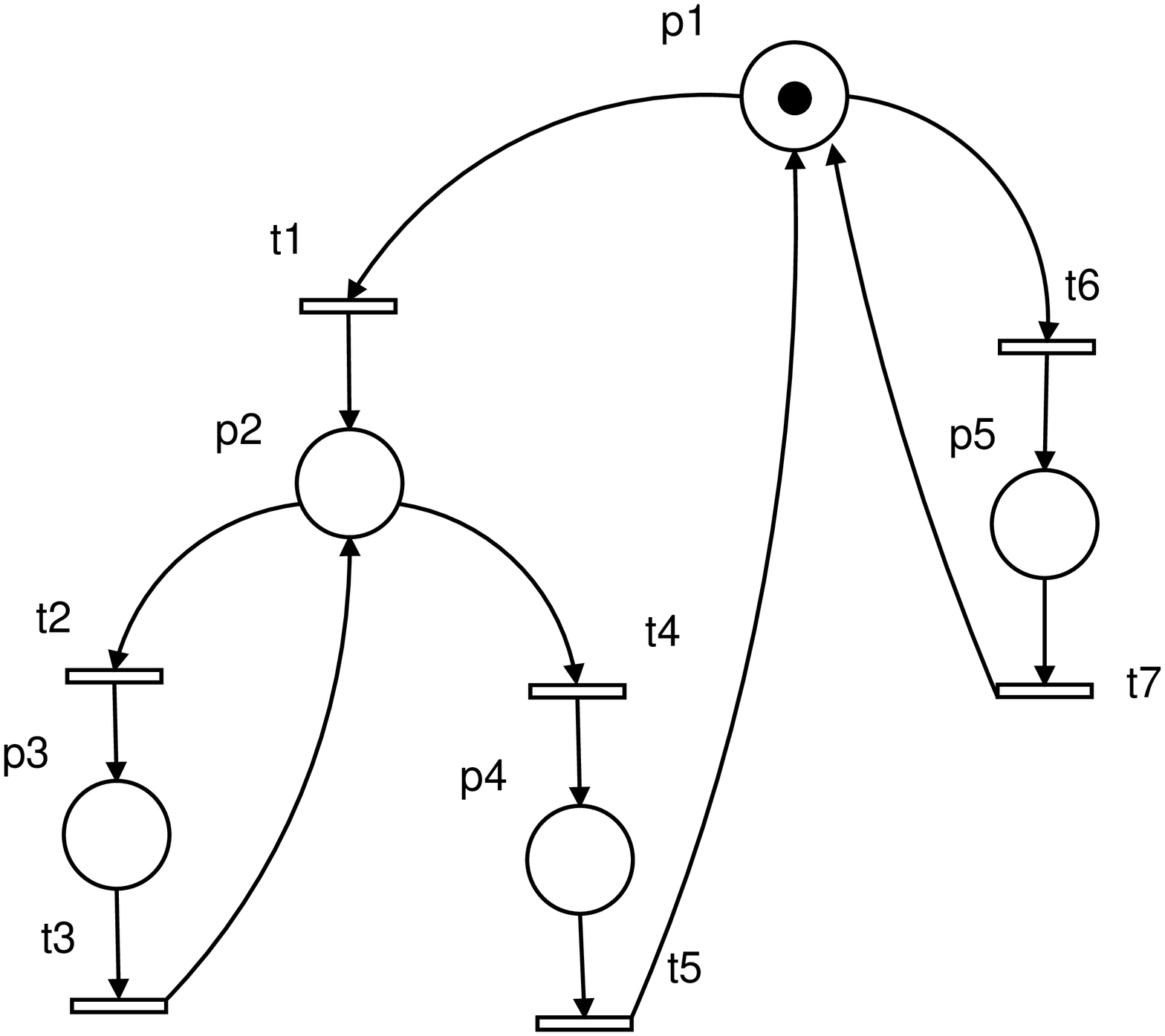}
\caption{\small Agent without waiting place} \label{fig:nowait}
\end{center}
\end{figure}  

\begin {definition} \label{def:wait}
Let $\N_i$ be an agent of a SSP $\N^s$, $\b x^i_1, \b x^i_2, \ldots, \b x^i_k$ be the minimal (local) T-semiflows of $\N_i$ and let $\bar{P_i}=\left(\displaystyle\bigcap_{n=1}^{k}\preset||\b x^i_n||\right) \cap P_i$ be the set of common places of the local T-semiflows, where $P_i$ is the set of places of $\N_i$ (i.e., without the buffers places)
\begin{itemize}
\item If $k=1$ (i.e., only one T-semiflow) the \emph{waiting place} is the marked place at the initial marking (which is unique according to Def.~\ref{def:dssp}).
\item If $k \geq 2$ and $\bar{P_i} \neq \b \emptyset$. The place $p_e \in \bar{P_i}$ is called \emph{waiting place} if there exists no direct path in $\N_i$ from $p_e$ to other place in $\bar{P_i} \setminus \{p_e\} $. 
\item If $k \geq 2$ and $\bar{P_i} = \emptyset$ implies that does not exists a \emph{waiting place} in $\N_i$.  
\end{itemize}  
\end{definition} 

For example, in the subnet $\N_1$ of the SSP net in Fig.~\ref{fig:DSSPpb1} there are three T-semiflows: $\b x_4$, $\b x_5$ and $\b x_6$ given in Tab.~\ref{table:TSFpb1}. $\preset||\b x_4|| \cap P_1=\{p_1,p_2\}$, $\preset||\b x_5|| \cap P_1=\{p_1,p_2\}$ and $\preset||\b x_8|| \cap P_1=\{p_1,p_3\}$. So $\bar{P_1}=p_1$ and consequently the waiting place is $p_1$.
However, the net in Fig.~\ref{fig:nowait} represents an agent of a SSP structure without a waiting place. This net have 3 minimal T-semiflow: $\b x_1=t_2+t_3$, $\b x_2=t_1+t_4+t_5$ and $\b x_3=t_6+t_7$. The input places of this minimal T-semiflow are the following: $\preset||\b x_1||=\{p_2,p_3\}$, $\preset||\b x_2||=\{p_1,p_2,p_4\}$ and $\preset||\b x_3||=\{p_1,p_5\}$. In this case, $\bar{P_i}=\emptyset$ and does not exist a waiting place.  } 
 
Let $p_i^e$ be the \emph{waiting place} of agent $\N_i$ that exists according to condition 5) of Def. \ref{def:ssp}. The first and last transitions of a local T-semiflow $\b x^i_l$ of agent $i$ are formally defined as,

\begin {itemize}
 \item $t^l_{first} = \postset {p_i^e} \cap ||\b x^i_l||$;
 \item $t^l_{last}= \preset {p_i^e} \cap ||\b x^i_l||$.
\end{itemize}

\begin{algorithm}[h]\label{alg:2}
 \begin{algorithmic}[1]
\REQUIRE SSP structure $\N^s=\langle P^s, T^s, \b{Pre}^s, \b{Post}^s \rangle$
\ENSURE Control PN $\N^c=\langle P^c, T^c, \b{Pre}^c, \b{Post}^c \rangle$
\STATE  Initialize the state of $\N^c$: $P^c := \emptyset, T^c := \emptyset , Pre^c :=\b 0 , Post^c := \b 0 ,$
\FORALL {$b_i$ $\in \N^s$}
\STATE Add a place $p_{b_i}$ to $P^c$
\ENDFOR
\FORALL {agents $\N_i^s$ of $\N^s$}
\STATE Add a place $p_{\N_i}$ to $P^c$
\STATE Compute all minimal T-semiflows of $\N_i^s$ in $\Gamma_i$ 
			\FORALL {$\b x^i_l \in \Gamma_i$}
			\STATE Add a transition $t^l_j$ to $T^c$; \COMMENT{representing $t^l_{first}$}
					\STATE Add a transition $t^l_k$ to $T^c$; \COMMENT{representing $t^l_{last}$}
					\STATE Add a place $p_{x_l}$ to $P^c$; \COMMENT{representing $\b x^i_l$} 
				  \STATE $\b {Post}^c[p_{x_l},t^l_j]=\b {Pre}^c[p_{x_l},t^l_k]=1$;
				  \STATE $\b {Pre}^c[p_{\N_i},t^l_j]= \b {Post}^c[p_{\N_i},t^l_k]=1$;
					
					\STATE Let $T_l = ||\b x^i_l||$;
					\FORALL {$b_i$ s.t. ${\b {Pre}^s}[b_i,T_l] \neq \b 0$}
					\STATE ${\b {Pre}^c}[p_{b_i},t^l_j]$= $\displaystyle\sum_{t \in T_l}{\b {Pre}^s}[b_i,t]$; 
					\ENDFOR
					\FORALL {$b_i$ s.t ${\b {Post}^s}[b_i,T_l] \neq \b 0$}
					\STATE ${\b {Post}^c}[p_{b_i},t^l_k]$= $\displaystyle\sum_{t \in T_i}{\b {Post}^s}[b_i,t].$ 
					\ENDFOR
\ENDFOR
\ENDFOR		

 \end{algorithmic}
 \caption{Computation of the control PN}
 \end{algorithm}
Using Alg.~\ref{alg:2}, a control PN $\N^c$ is obtained from a  structurally non live SSP structure $\N^s$ as follow: 

\begin{itemize}
\item First loop (steps 2-4): for each buffer $b_i$ in $\N^s$, a place $p_{b_i}$ is introduced in $\N^c$;
\item Second loop (steps 5-22): adds the other places and transitions to $\N^c$. This loop is iterated for all agents $\N_i^s$ of $\N^s$. Each iteration of $\N_i^s$ consists of,
\begin{itemize}
\item A new place $p_{\N_i}$ is added in $\N^c$ (step 6);
\item All minimal T-semiflows of $\N_i^s$ are computed  and saved in $\Gamma_i$ (step 7);
\item Third loop (steps 8 - 21) adds a ordinary sequence in $\N^c$ corresponding to each local T-semiflow of $\N_i^s$ and connect it with the input and output buffers. This loop consists of,
\begin{itemize}
\item For each local T-semiflow $\b x_l$ an ordinary subnet \{$t^l_j\rightarrow p_{x_l}\rightarrow t^l_k$\} is added in the $\N^c$ (steps 9-12);
\item Connect the place $p_{\N_i}$ with first transition of the ordinary sequence (introduced in step 9) and connect the last transition of the ordinary sequence (introduced in step 10) with place $p_{\N_i}$.
\item Step 14: All transitions belonging to the support of the T-semiflow $\b x_l$ are saved in $T_l$;
\item Forth loop (steps 15-17): for all input buffers of $\b x_l$ (input buffers of transitions in $T_l$), connect in $\N^c$ the corresponding input buffers (added in step 3) with the first transition of the ordinary sequence (added in step 9);
\item Fifth loop (steps 18-20): for all output buffers of $\b x_l$ (output buffers of transitions in $T_l$), connect in $\N^c$ the last transition of the ordinary sequence (added in step 10) with the corresponding output buffers (added in step 3);
\end{itemize}
\end{itemize}
\end{itemize}

In order to reduce the number of local T-semiflows in $\N^s$ and consequently the computational complexity of Alg.\ref{alg:2}, some basic and classical reduction rules~\cite{ICSilv93b,ARMura89} can be applied to $\N^s$ before computing $\N^c$. First, fusion of places and fusion of transitions must be considered and then, identical transitions (with the same input and output places) can be reduced to a unique transition.
\ignore{\begin{definition} Two transition $t$ and $t'$ are identical if $\b {Pre}[P,t]=\b {Pre}[P,t']$ and $\b {Post}[P,t]=\b {Post}[P,t']$. 
\end{definition}}

\begin{figure}
\psfrag{t1}{$\textcolor[rgb]{1,0,0}{t_1}$}\psfrag{t11}{$\textcolor[rgb]{1,0,0}{t_{11}}$}
\psfrag{t2}{$\textcolor[rgb]{1,0,0}{t_2}$}\psfrag{t10}{$\textcolor[rgb]{1,0,0}{t_{10}}$}
\psfrag{t5}{$\textcolor[rgb]{1,0,0}{t_5}$}\psfrag{t7}{$\textcolor[rgb]{1,0,0}{t_{7}}$}\psfrag{t12}{$\textcolor[rgb]{1,0,0}{t_{12}}$}
\psfrag{t6}{$\textcolor[rgb]{1,0,0}{t_6}$}\psfrag{t16}{$\textcolor[rgb]{1,0,0}{t_{16}}$}
\psfrag{t3}{$\textcolor[rgb]{1,0,0}{t_3}$}\psfrag{t13}{$\textcolor[rgb]{1,0,0}{t_{13}}$}
\psfrag{t9}{$\textcolor[rgb]{1,0,0}{t_9}$}\psfrag{t19}{$\textcolor[rgb]{1,0,0}{t_{19}}$}
\psfrag{t4}{$\textcolor[rgb]{1,0,0}{t_4}$}\psfrag{t14}{$\textcolor[rgb]{1,0,0}{t_{14}}$}
\psfrag{t8}{$\textcolor[rgb]{1,0,0}{t_8}$}\psfrag{t18}{$\textcolor[rgb]{1,0,0}{t_{18}}$}
\psfrag{x5}{$\textcolor[rgb]{1,0,0}{x_5}$}\psfrag{x6}{$\textcolor[rgb]{1,0,0}{x_6}$}
\psfrag{x7}{$\textcolor[rgb]{1,0,0}{x_7}$}\psfrag{x8}{$\textcolor[rgb]{1,0,0}{x_8}$}
\psfrag{x9}{$\textcolor[rgb]{1,0,0}{x_9}$}\psfrag{x10}{$\textcolor[rgb]{1,0,0}{x_{10}}$}\psfrag{x4}{$\textcolor[rgb]{1,0,0}{x_{4}}$}
\psfrag{x11}{$\textcolor[rgb]{1,0,0}{x_{11}}$}\psfrag{x12}{$\textcolor[rgb]{1,0,0}{x_{12}}$}
\psfrag{t41}{$t^4_1$}\psfrag{t42}{$t^4_2$}\psfrag{t51}{$t^5_1$}\psfrag{t56}{$t^5_8$}
\psfrag{t69}{$t^6_9$}\psfrag{t610}{$t^{6}_{10}$}\psfrag{t75}{$t^{7}_5$}
\psfrag{t73}{$t^7_{3}$}\psfrag{t85}{$t^8_{5}$}\psfrag{t87}{$t^{8}_{7}$}\psfrag{t912}{$t^{9}_{12}$}
\psfrag{t911}{$t^9_{11}$}
\psfrag{px4}{$p_{x_4}$}\psfrag{px5}{$p_{x_5}$}\psfrag{px6}{$p_{x_6}$}\psfrag{px7}{$p_{x_7}$}\psfrag{px8}{$p_{x_8}$}
\psfrag{px9}{$p_{x_9}$}
\psfrag{b1}{$p_{b_1}$}\psfrag{b2}{$p_{b_2}$}
\psfrag{b3}{$p_{b_3}$}\psfrag{b4}{$p_{b_4}$}
\psfrag{b5}{$p_{b_5}$}\psfrag{b6}{$p_{b_6}$}
\psfrag{b7}{$p_{b_7}$}\psfrag{b8}{$p_{b_8}$}\psfrag{N1}{$\N_1$}
\psfrag{N2}{$\N_2$}
\psfrag{bN1}{$p_{\N_1}$}\psfrag{bN2}{$p_{\N_2}$}
   \begin{center}
   \includegraphics[width=1\columnwidth]{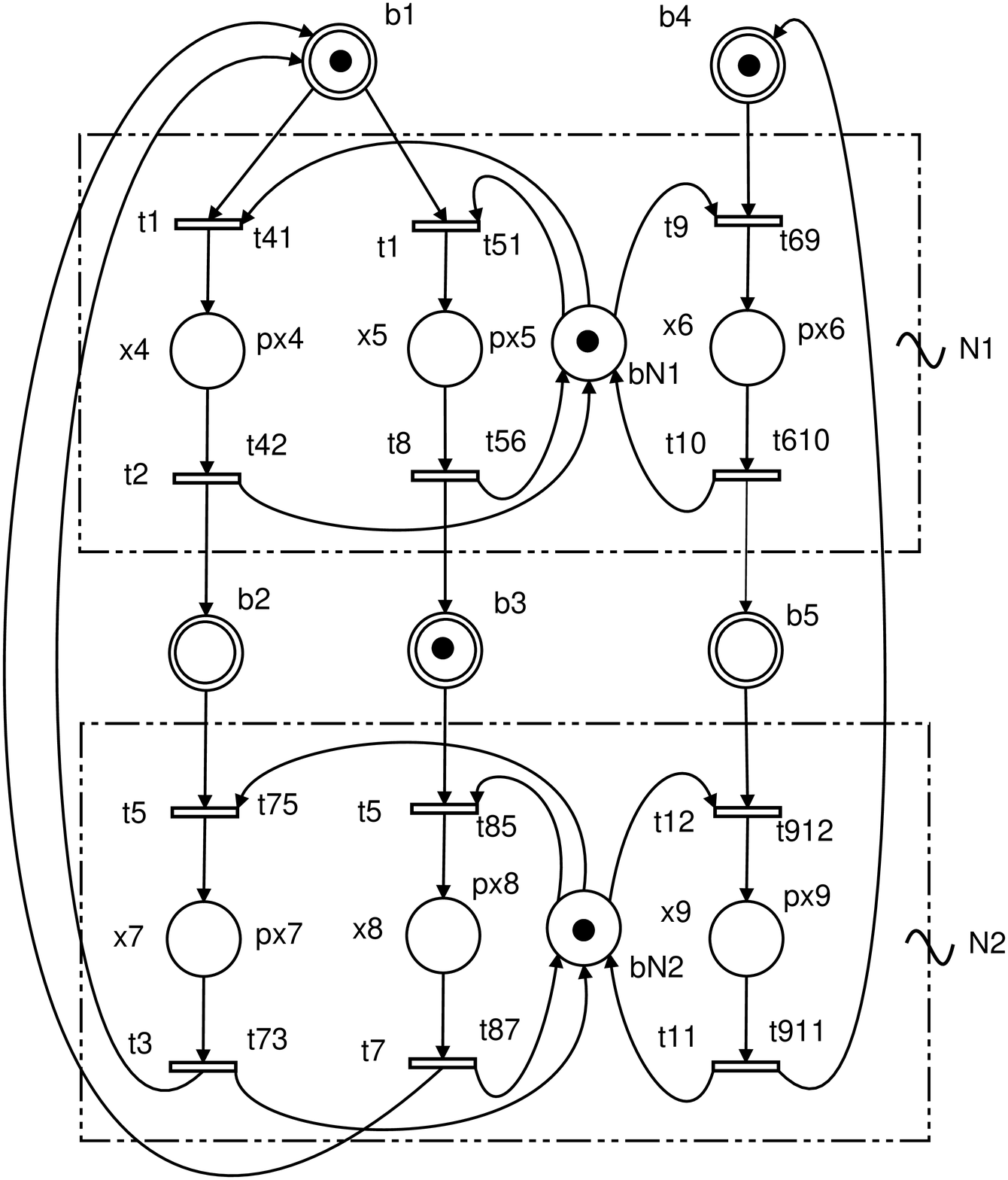}
\caption{\small Control PN obtained from SSP structure in Fig.~\ref{fig:DSSPpb1}} \label{fig:control1}.
\end{center}
\end{figure}

\begin{example}
Let us consider the SSP net $\N^s$ in Fig.~\ref{fig:DSSPpb1}. Applying Alg. \ref{alg:2} to $\N^s$, the control PN $\N^c$ in Fig.~\ref{fig:control1} is obtained. First, since $\N^s$ has 5 buffers, in $\N^c$ the places $p_{b_i}$ ($i=1,2,\dots,5$) are added. Furthermore, since $\N^s$ has two agents ($\N_1$ and $\N_2$), the places $p_{\N_1}$ and $p_{\N_2}$ are added in $\N^c$. After this, steps 8-20 are applied to all minimal T-semiflows of each agent. Let us consider for example the local T-semiflow $\b x_7= t_5+t_4+t_3$. Corresponding to this T-semiflow, the ordinary sequence \{$t^7_5\rightarrow p_{x_7}\rightarrow t^7_3$\} is added in $\N^c$. Since in $\N^s$, $b_2$ is an input buffer of $\b x_7$, in $\N^c$ there is an arc from the place $p_{b_2}$ to the input transition $t^7_5$. In addition, $\b x_7$ has an output buffer $b_1$ so in $\N^c$ there exits an arc from the output transition $t^7_3$ to the place $p_{b_1}$. Moreover, since $||\b x_7||$ belongs to $\N_2$, we add an arc from $p_{\N_2}$ to $t^7_5$ and from $t^7_3$ to $p_{\N_2}$. Finally, by applying steps 8-20 to all local T-semiflows, we obtain the control PN in Fig~\ref{fig:control1}. Notice that in Fig~\ref{fig:control1}, all six T-semiflows of $\N_1$ and $\N_2$ (given in Tab. \ref{table:TSFpb1}) are disjointly represented.
\end{example}

\ignore{\begin{figure}
\psfrag{t75}{$t^7_5$}\psfrag{t73}{$t^7_3$}
\psfrag{t3}{$t_3$}\psfrag{t4}{$t_4$}\psfrag{t5}{$t_5$}
\psfrag{te5}{$\textcolor[rgb]{1,0,0}{t_5}$}
\psfrag{te3}{$\textcolor[rgb]{1,0,0}{t_3}$}
\psfrag{px7}{$p_{x_7}$}\psfrag{p4}{$p_4$}\psfrag{p5}{$p_5$}\psfrag{p6}{$p_6$}
\psfrag{pb1}{$p_{b_1}$}\psfrag{pb2}{$p_{b_2}$}
\psfrag{b1}{$b_1$}\psfrag{b2}{$b_2$}
\psfrag{xe7}{$\textcolor[rgb]{1,0,0}{x_7}$}
\psfrag{pN2}{$p_{\N_2}$}\psfrag{N2}{$\N_2$}
\psfrag{Algorithm 2}{Algorithm 1}\psfrag{x5=t1,t2,t3}{$\b x_5=t_1,t_2,t_3$}

   \begin{center}
   \includegraphics[width=1\columnwidth]{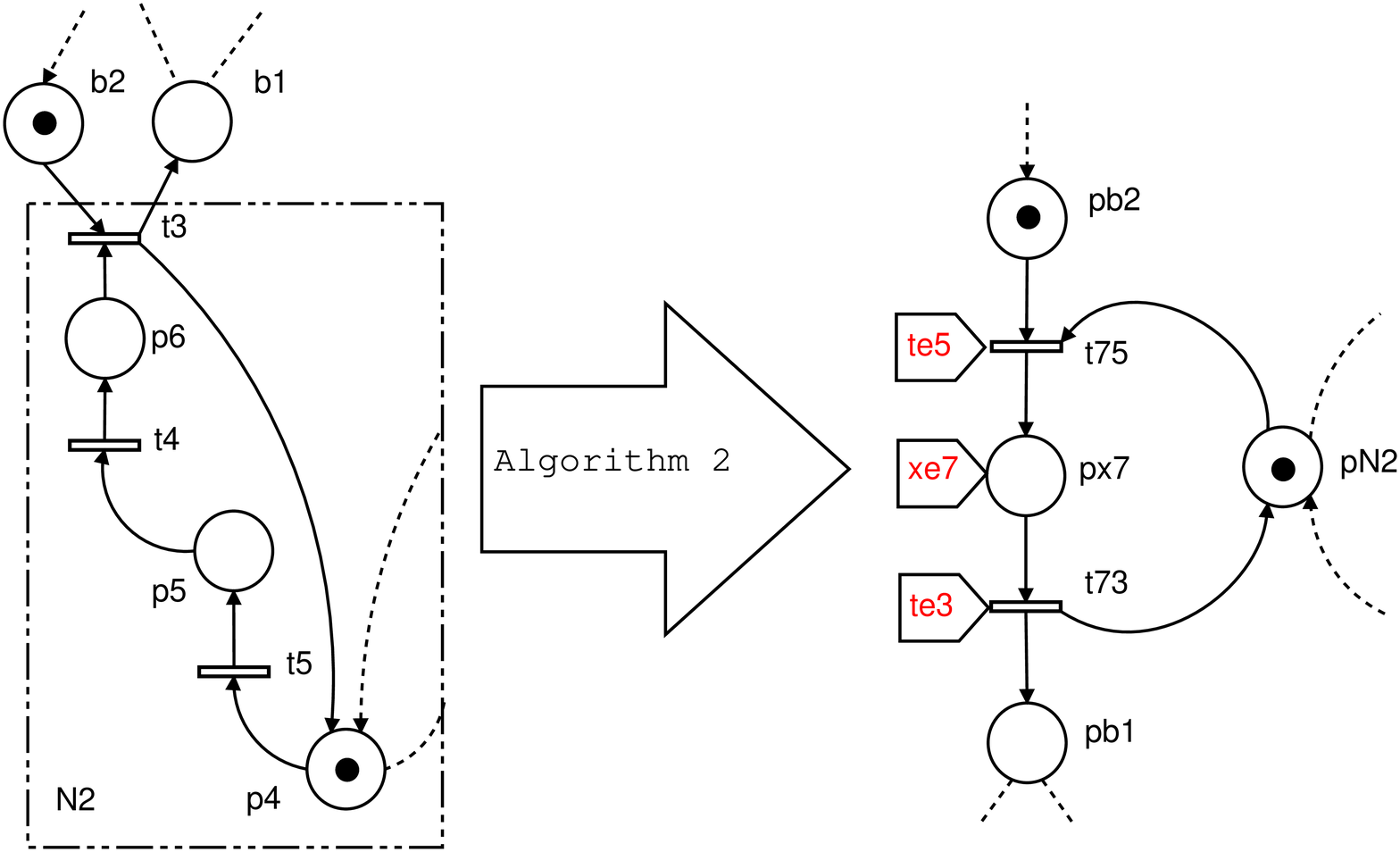}
\caption{\small Transformation obtained by applying algorithm \ref{alg:2} to a simple agent composed by 1 local T-semiflow $\b x_7=t_5+t_4+t_3$} \label{fig:transform}
\end{center}
\end{figure} 
}

\ignore{
\begin{figure*}[htb]
\psfrag{x}{$\b x$}\psfrag{tch}{$t_{ch}$}
   \begin{center}
   \includegraphics[width=1.7\columnwidth]{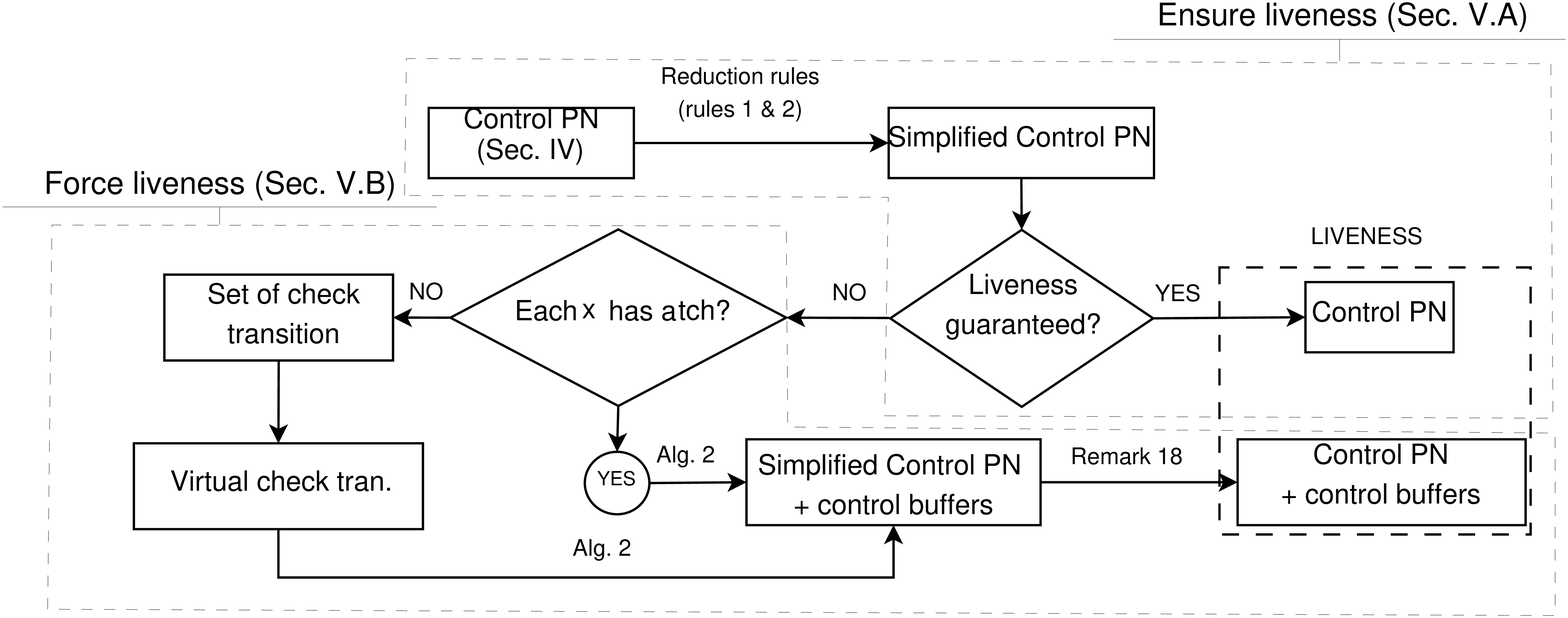}
   \caption{\small Diagram for ensuring liveness in the Control PN} \label{liveCPN}
\end{center}
\end{figure*}}

\section {Liveness of the Control PN}\label{sec:live}
Since the control PN system $\langle \N^c, \b{m}_0 \rangle$ should \emph{guide} the evolution of the SSP avoiding any blocking situation, it should be live as well. Otherwise, some local T-semiflows of the SSP will never be fired after control PN reaches a livelock state.

\subsection{Structural liveness analysis of the Control PN}\label{sec:liveen}
 To check structural liveness of the control PN, the following two reduction rules are first applied to the control PN $\N^c$ to obtain the \emph {simplified control PN} denoted as $\N^{cs}$.

\begin{rul}\label{reduc1}
The sequences \{$t^l_j\rightarrow p_{x_l}\rightarrow t^l_k$\} are reduced to a unique transition $t_{x_l}$.
\end{rul}

\begin{rul}\label{reduc2}
Once rule \ref{reduc1} is applied to all sequences, places $p_{\N_i}$ are implicit because $\b {Pre}^c[p_{\N_i},T]=\b {Post}^c[p_{\N_i},T]=\b m[p_{\N_i}]$ and are removed.
\end{rul}

Because in $\N^c$ a sequence \{$t^l_j\rightarrow p_{x_l}\rightarrow t^l_k$\} represents a local T-semiflow of $\N^s$, in $\N^{cs}$ transition $t_{x_l}$ (obtained by rule \ref{reduc1}) also represents a local T-semiflow of $\N^s$.
Moreover, by applying rule \ref{reduc2}, places $p_{\N_i}$ are removed, so all places in $\N^{cs}$ represent buffers.
In this way, $\N^{cs}$ may be composed of isolated subnets where the places represent buffers and the transitions represent local T-semiflows of $\N^s$. Moreover, each isolated subnet considers the global T-semiflows of $\N^s$ which have common input and/or output buffers. 
\begin{figure}[h]
   \begin{center}
	\psfrag{tx4}{$t_{x_4}$}\psfrag{tx5}{$t_{x_5}$}\psfrag{tx6}{$t_{x_6}$}\psfrag{tx7}{$t_{x_7}$}\psfrag{tx8}{$t_{x_8}$}\psfrag{tx9}{$t_{x_9}$}\psfrag{b1}{$p_{b_{1}}$}\psfrag{b2}{$p_{b_{2}}$}\psfrag{b3}{$p_{b_{3}}$}\psfrag{b4}{$p_{b_{4}}$}\psfrag{b5}{$p_{b_{5}}$}
   \includegraphics[width=0.5\columnwidth]{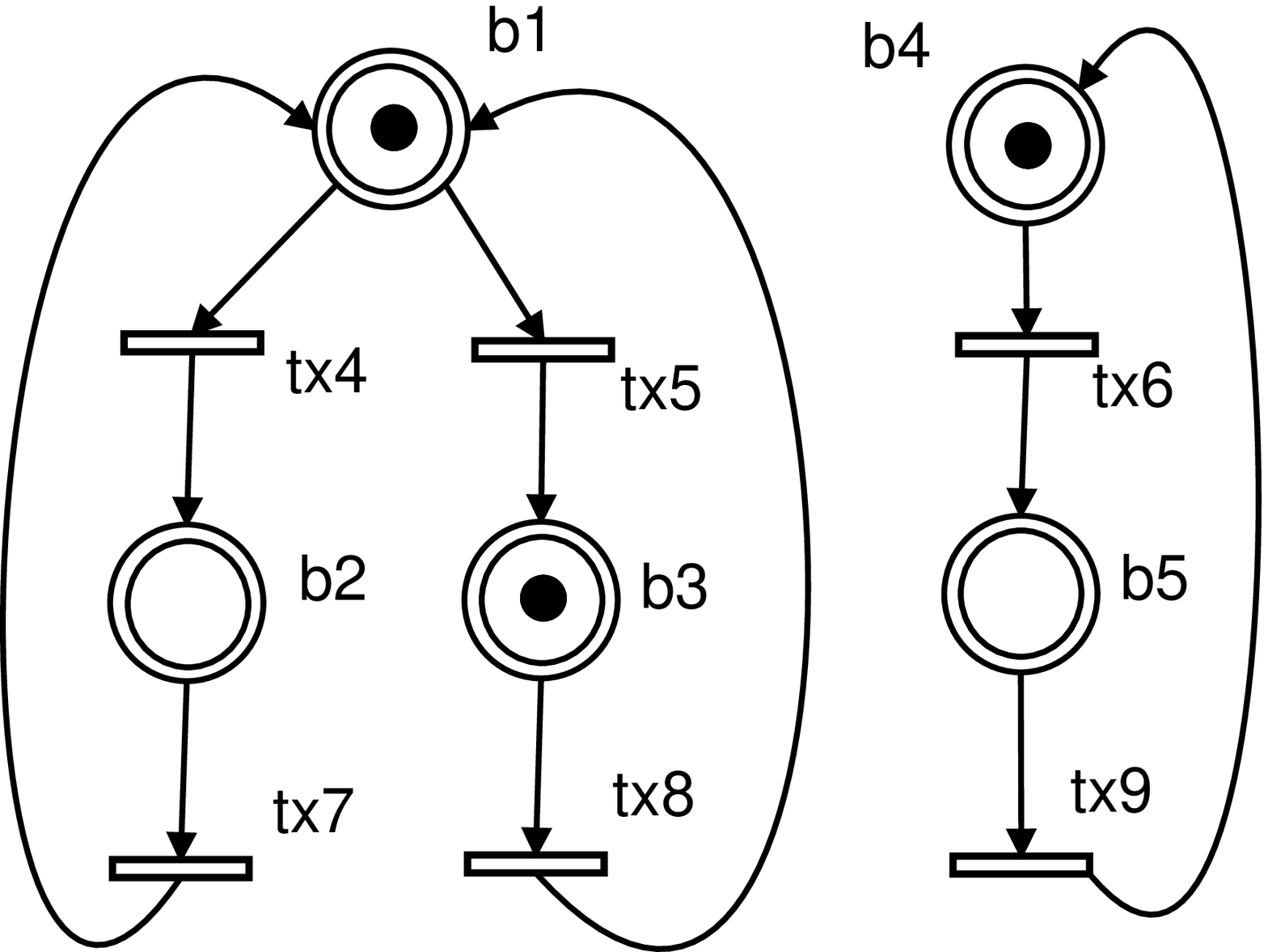}
   \caption{\small Simplified control PN $\N^{cs}$ after applying the reduction rules \ref{reduc1} and \ref{reduc2} to the control PN in Fig.\ref{fig:control1}.} \label{fig:ConSimPb1}
\end{center}
\end{figure} 

Fig. \ref{fig:ConSimPb1} shows the simplified control PN obtained after applying rules \ref{reduc1} and \ref{reduc2} to the control PN in Fig.\ref{fig:control1}. It can be seen that this net is composed by two isolated sub-nets. The one at left corresponding to the global T-semiflows $\b x_1$ and $\b x_2$ (given in Tab. \ref{table:TSFpb1}) which have the common input/output buffer $b_1$ (modeled by place $p_{b_1}$). The subnet at the right corresponds to the global T-semilfow $\b x_3$.

\begin{proposition} \label{def}
Let us assume a SSP structure $\N^s$, its control PN $\N^c$ obtained by applying Alg. \ref{alg:2}  and the simplified control PN $\N^{cs}$ obtained by applying Reduction Rules \ref{reduc1} and \ref{reduc2} to $\N^c$. If all isolated subnets $\N^{cs}_i$ of $\N^{cs}$ are CF or JF, then the control PN $\N^c$ is structurally live.
\end{proposition}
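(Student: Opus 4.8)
The plan is to transfer the question first from $\N^c$ to the simpler $\N^{cs}$, then to each isolated subnet, and finally to invoke the structure theory of Choice-Free and Join-Free nets. First I would argue that structural liveness is invariant under Reduction Rules \ref{reduc1} and \ref{reduc2}, so that $\N^c$ is structurally live if and only if $\N^{cs}$ is. Rule \ref{reduc1} fuses each series place $p_{x_l}$ (which, by construction in Alg.~\ref{alg:2}, has the single input transition $t^l_j$ and the single output transition $t^l_k$) into the macrotransition $t_{x_l}$; this is a classical liveness-preserving macrotransition reduction. Rule \ref{reduc2} then deletes the places $p_{\N_i}$, which are implicit: after Rule \ref{reduc1} each $t_{x_l}$ of agent $i$ has $p_{\N_i}$ both as input and output with weight one, so $p_{\N_i}$ is a constantly marked self-loop that can never disable a transition, and removing an implicit place does not change liveness. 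Since $\N^{cs}$ splits into disjoint subnets $\N^{cs}_i$ and the initial marking may be chosen independently on each connected component, $\N^c$ is structurally live if and only if every $\N^{cs}_i$ is structurally live.

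Next I would show that each $\N^{cs}_i$ inherits conservativeness and consistency from $\N^s$. The key observation is that the incidence matrix of $\N^{cs}$ satisfies $\b{C}^{cs}[p_b,t_{x_l}] = \b{C}^s[b,\cdot]\cdot\b{x}^i_l$, i.e. the column of $t_{x_l}$ is the buffer part of $\b{C}^s$ applied to the local T-semiflow $\b{x}^i_l$; moreover, because each agent is a strongly connected state machine (condition 3 of Def.~\ref{def:ssp}), every local T-semiflow is an elementary circuit and therefore annuls the agent-place rows, $\b{C}^s_P\cdot\b{x}^i_l=\b 0$. For conservativeness, take a positive P-semiflow $\b{y}=(\b{y}_P,\b{y}_B)$ of $\N^s$, which exists by condition 6 of Def.~\ref{def:ssp}; using $\b{y}_P\b{C}^s_P+\b{y}_B\b{C}^s_B=\b 0$ together with $\b{C}^s_P\cdot\b{x}^i_l=\b 0$, one checks that the buffer weights $\b{y}_B$ form a positive P-semiflow of $\N^{cs}$, so $\N^{cs}$ and hence each subnet is conservative. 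For consistency, a positive T-semiflow $\b{x}^s$ of $\N^s$ decomposes over the state-machine agents into a nonnegative combination of their elementary circuits, i.e. of the local T-semiflows, which yields a T-semiflow of $\N^{cs}$.

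Finally, each subnet $\N^{cs}_i$ is connected, conservative and consistent, and by hypothesis is CF or JF. I would close by invoking the structure theory of these classes: a connected Choice-Free net that is consistent and conservative is structurally live (and structurally bounded), and the analogous characterization holds in the Join-Free case. Pulling the structural liveness of every $\N^{cs}_i$ back through the equivalence established in the first step then delivers structural liveness of $\N^c$.

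The hard part will be the consistency claim in the second step and the invocation in the third. For consistency I must guarantee that the decomposition of $\b{x}^s$ can be chosen with a strictly positive coefficient on \emph{every} local T-semiflow; this is in fact necessary, since $\N^{cs}$ is structurally bounded and a bounded net cannot be live without being consistent, and it amounts to arguing that each elementary circuit of an agent participates in some global T-semiflow of $\N^s$. For the third step the delicate point is that liveness, unlike consistency and conservativeness, is not preserved under arc reversal, so the Join-Free case cannot be obtained by merely dualizing the Choice-Free result and must be grounded on its own structural-liveness characterization.
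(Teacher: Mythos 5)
Your proposal is correct and follows essentially the same route as the paper: establish that each isolated subnet $\N^{cs}_i$ inherits consistency and conservativeness from $\N^s$ through Alg.~\ref{alg:2} and the reduction rules, invoke the structural-liveness characterization of consistent and conservative CF/JF nets, and pull the conclusion back to $\N^c$ via the liveness-preserving nature of Rules~\ref{reduc1} and~\ref{reduc2}. You simply supply more detail than the paper does on the intermediate verifications (the P-semiflow restriction to the buffer places and the positive decomposition of the global T-semiflow over the local circuits), which the paper asserts without elaboration.
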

\begin{proof}
Alg.\ref{alg:2} preserves in $\N^c$ the number of agents, T-semiflows and buffers of $\N^s$. Moreover, Alg.\ref{alg:2} also preserves in $\N^c$ the consumption/production relation between buffers and local T-semiflows that exists in $\N^s$. Since $\N^s$ is consistent and conservative, then $\N^c$ is consistent and conservative. On the other hand, the reduction rules \ref{reduc1} and \ref{reduc2} do not change the number of tokens consumed and produced from and to the buffers, so if $\N^c$ is consistent and conservative, then each subnet  $\N^{cs}_i$ of $\N^{cs}$ is consistent and conservative.
Furthermore, if each subnet $\N^{cs}_i$ is CF or JF then $\N^{cs}$ is structurally live according to \cite{ARTeCoSi97}. 

On the other hand, the applied reduction rules (\ref{reduc1} and \ref{reduc2}) to $\N^{c}$ to obtain $\N^{cs}$ preserve liveness property \cite{ICSilv93b}. Therefore, if $\N^{cs}$ is structurally live then $\N^c$ is structurally live.
\end{proof}

For example, let us assume the simplified control PN of Fig. \ref{fig:ConSimPb1} obtained from the control PN in Fig. \ref{fig:control1}. Notice that the subnet on the left part is JF while the one on the right is CF and JF. According to Prop. \ref{def}, the corresponding control PN in Fig. \ref{fig:control1} is structurally live. therefore, there exists an initial marking for the buffers that make this net live. This marking could be for example the initial marking that is allowing the firing of all global T-semiflows once.

\subsection{Structural Liveness enforcement of the Control PN}\label{sec:livefor}

If a subnet $\N^{cs}_i$ is not JF neither CF then $\N^{cs}_i$ may not be structurally live. This subsection proposes a methodology to force the structural liveness of a structurally non live $\N^{cs}$. 

Recall that each $\N^{cs}_i$ of $\N^{cs}$ is consistent and conservative (see the first part of the proof of Prop.  \ref{def}). So, each $\N^{cs}_i$ is composed of one or more T-semiflows covering all transitions. In order to force structural liveness, the basic idea is to ensure that the transitions are fired proportionally according to the global T-semiflows. In this way, we prevent the free resolution of a conflict that subsequently requires a synchronization.

The proposed methodology adds an input place connected with ordinary arcs with each transition in conflict. The number of tokens in these new places is equal to the number of times that its output transition can be fired so, equal to the number of times that its output transition appears in each T-semiflows. 

In order to identify when a global T-semiflow has been fired completely, a check transition is introduced. Its firing means that the T-semiflow has finished and all its transitions can be fired again. So, the firing of the check transition should update the marking of the new added places.

Before giving formally the methodology to add this new control places let us consider a simple intuitive example.
\begin{example}
The left part of Fig.\ref{fig:ConSimPb2} shows the simplified control PN ($\N^{cs}$) of the SSP net $\N^s$ in Fig.\ref{fig:DSSPpb2}. This net is not live because there exists a conflict ($t_{x2}$,$t_{x3}$) and subsequently a synchronization in transition $t_{x4}$ (\textbf{Pb 2}).

\begin{figure}[h]
   \begin{center}
	\psfrag{tx4}{$t_{x_4}$}\psfrag{tx2}{$t_{x_2}$}\psfrag{tx3}{$t_{x_3}$}\psfrag{tx7}{$t_{x_7}$}\psfrag{tx8}{$t_{x_8}$}\psfrag{tx9}{$t_{x_9}$}\psfrag{b1}{$p_{b_{1}}$}\psfrag{b2}{$p_{b_{2}}$}\psfrag{b3}{$p_{b_{3}}$}\psfrag{b4}{$p_{b_{4}}$}\psfrag{b5}{$p_{b_{5}}$}\psfrag{Algorithm}{Alg. 2}\psfrag{ptx2}{$p_{t_{x2}}$}\psfrag{ptx3}{$p_{t_{x3}}$}
   \includegraphics[width=1\columnwidth]{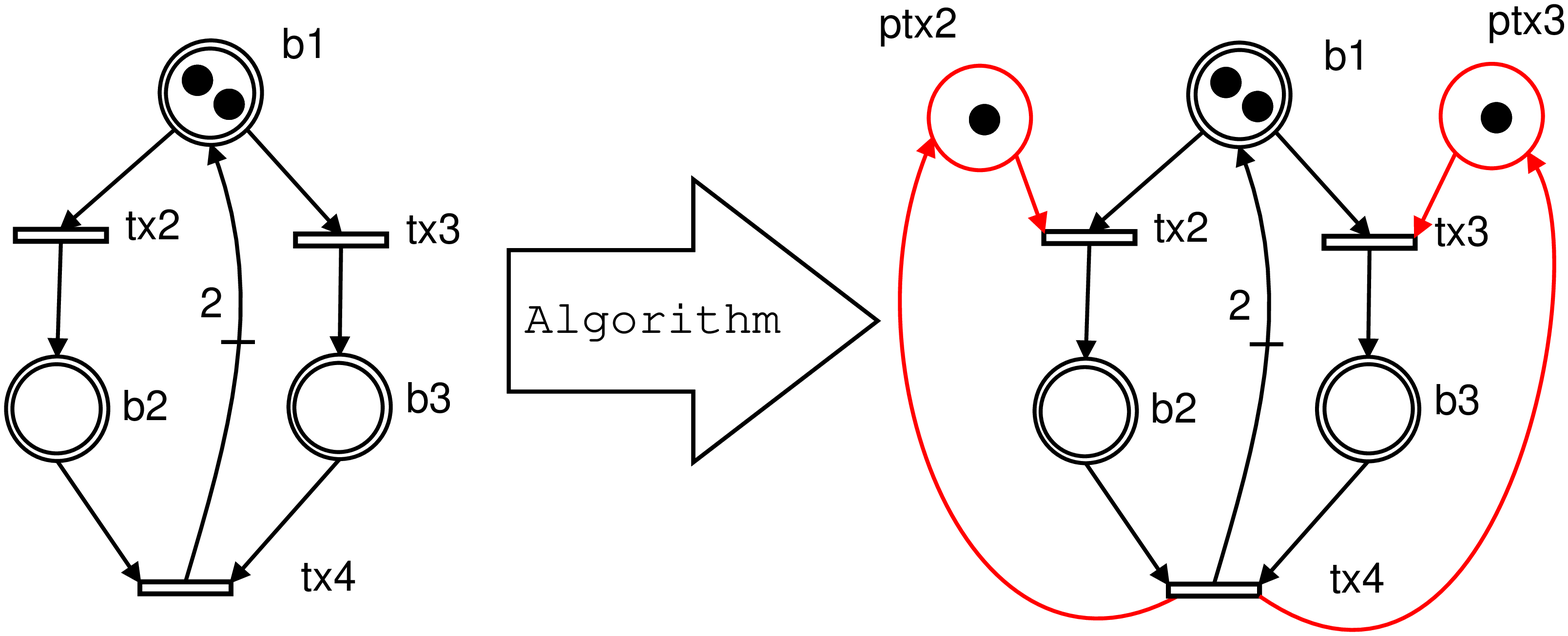}
   \caption{\small Left: simplified control PN of the SSP net in Fig. \ref{fig:DSSPpb2}; right: simplified control PN with new added buffers} \label{fig:ConSimPb2}
\end{center}
\end{figure} 

In the right part of Fig.\ref{fig:ConSimPb2} the new control buffers $p_{t_{x2}}$ and $p_{t_{x3}}$ are added for controlling the firing of $t_{x2}$ and $t_{x3}$. These new buffers constraint to one the times that $t_{x2}$ and $t_{x3}$ can fire before one firing of $t_{x4}$.
\end{example}

\ignore{To know when a global T-semiflow of a subnet composing the simplified Control PN has been fired completely, we define its check transition} 
The firing of a check transition of a given global T-semiflow must produce enough tokens to fire completely the T-semiflow again. For this net, the check transition is $t_{x4}$. The formal definition of check transitions is given in Def.\ref{cecktr}.

\begin{definition}\label{cecktr}
 Let $\N^{cs}_i$ be a subnet of a simplified control PN $\N^{cs}$ and let $\b x_j$ be a local T-semiflow of $\N^{cs}_i$. A transition $t^j_{ch} \in ||\b{x}_j||$ is called \emph{check transition} of $\b x_j$ if 
 \begin{itemize}
\item  it is not belonging to the support of other T-semiflow of $\N^{cs}$ and,
\item its firing creates enough tokens to fire again completely $\b x_j$. 
\end{itemize}
If there are more than one transition  in $||\b{x}_j||$ that fulfill the previous two constraints, any non conflict transition can be chosen. 
\end{definition}
\ignore{being $\b m= \b {Post}^{cs}[P,t^j_{ch}]$ the marking produced by firing of $t^j_{ch}$, the following conditions are satisfied:
\begin{enumerate}
\item $\begin{array}{l}
\b x_k[t^j_{ch}]= 
\left\{
\begin{array}{lll}
1,  &\text{if} &k=j \\
0,  &\text{if} &k \ne j
\end{array}
\right.,
\end{array}$
\item There exists a firing sequence $\sigma$ such that its firing vector $\b{\sigma}$ satisfies $\b m \xrightarrow{\sigma} \b m$ and $\b{\sigma}=\b x_j$.
\end{enumerate} }

\ignore {\red Condition 1) in the previous definition ensures that the check transition $t^j_{ch}$ belongs only to $||\b{x}_j||$ while condition 2) ensures the existence of a firing sequence with the firing vector equal to $\b{x}_j$ that can be fired after the firing of $t^j_{ch}$. If there are more than one transition that fulfill the constraints in Def. \ref{cecktr}, any non conflict transition can be chosen. }

\begin{algorithm}[h]\label{conpla}
 \begin{algorithmic}[1]
\REQUIRE A structurally non live net $\N^{cs}_i$.
\ENSURE A structurally live subnet $\N^+$ and a live $\b m_0$.
\STATE Let $\N^+ = \N_i^{cs}$.
\STATE Compute the set $X$ of minimal T-semiflows of $\N^{cs}_i$.
\FORALL {T-semiflow $\b x_i$ of $X$}
\STATE Obtain the check transition $t^i_{ch}$.
\ENDFOR
\STATE Let $T_{ck}$ be the set of check transitions.
\STATE Let $T_{cn}$ be the set of conflict transitions.
\FORALL {$t_i$ $\in$ ($T_{cn}$ $\backslash$ $T_{ck}$)}
\STATE Add a place $p_{t_i}$ s.t. $\b{Pre}^+ [p_{t_i},t_i]=1$.
\ENDFOR	
\FORALL {T-semiflow $\b x_i$ of $X$}
	\FORALL {$t_j$ s.t $\b x_i[t_j]>0$ and $t_j \in (T_{cn}$ $\backslash$ $T_{ck}$)}
    \STATE $\b {Post}^+[p_j,t^i_{ch}]=\b x_i[t_j]$
	\ENDFOR
\ENDFOR
\STATE  $\b m_0=\displaystyle\sum_{\b x_i \in X}{\b {Post}^+[P^+,t^i_{ch}]}$
 \end{algorithmic}
 \caption{Enforcing liveness of a structurally non live $\N^{cs}_i$.}
 \end{algorithm}

Alg. \ref{conpla} adds the control places and compute an initial marking that ensure the liveness of a subnet $\N^{cs}_i$ by ensuring the proportional firing of the transitions belonging to the T-semiflows. Each new added place only has one output transition, so the resulted net preserves the distributiveness property. 
It is necessary that all T-semiflows of the sub-nets to which Alg. \ref{conpla} is applied to have a check transition according to Def. \ref{cecktr}. Alg. \ref{conpla} adds a control place $p_{t_j}$ for each conflict transition $t_j$ that is not a check transition $t^i_{ch}$. The firing of $t_j$ is constrained by the marking of the new added place $p_{t_j}$: $\b{Pre}^+ [p_{t_j},t_j]=1$ (steps 8-10).
When the check transition $t^i_{ch}$ of T-semiflow $\b x_i$ is fired, the marking of the places $p_{t_j}$ that control transition $t_j$ belonging to $\b x_i$ are updated: $\b {Post^+}[p_j,t^i_{ch}]=\b x_i[t_j]$ (Steps 11-15). The initial marking $\b m_0$ is given by the marking that would be produced by firing each check transition $t^i_{ch}$ (step 16).

\begin{figure}
\psfrag{pt1}{$\red{p_{t_1}}$}\psfrag{pt2}{$\red{p_{t_2}}$}\psfrag{pt3}{$\red{p_{t_3}}$}\psfrag{pt4}{$\red{p_{t_4}}$}
\psfrag{pt5}{$\red{p_{t_5}}$}\psfrag{pt6}{$\red{p_{t_6}}$}\psfrag{pt7}{$\red{p_{t_7}}$}
\psfrag{t1}{$t_1$}\psfrag{t2}{$t_2$}\psfrag{t3}{$t_3$}\psfrag{t4}{$t_4$}\psfrag{t5}{$t_5$}\psfrag{t6}{$t_6$}\psfrag{t7}{$t_7$}\psfrag{t8}{$t_8$}\psfrag{t9}{$t_9$}\psfrag{p1}{$p_1$}\psfrag{p2}{$p_2$}\psfrag{p3}{$p_3$}\psfrag{p4}{$p_4$}\psfrag{p5}{$p_5$}\psfrag{p6}{$p_6$}\psfrag{p7}{$p_7$}\psfrag{p8}{$p_8$}\psfrag{b1}{$b_{1}$}\psfrag{b2}{$b_{2}$}\psfrag{b3}{$b_{3}$}\psfrag{b4}{$b_{4}$}\psfrag{x1}{$\b{x}_{1}$}\psfrag{x2}{$\b{x}_{2}$}\psfrag{x3}{$\b{x}_{3}$}\psfrag{x4}{$\b{x}_{4}$}\psfrag{N1}{$\N_1$}\psfrag{N2}{$\N_2$}
   \begin{center}
   \includegraphics[width=1\columnwidth]{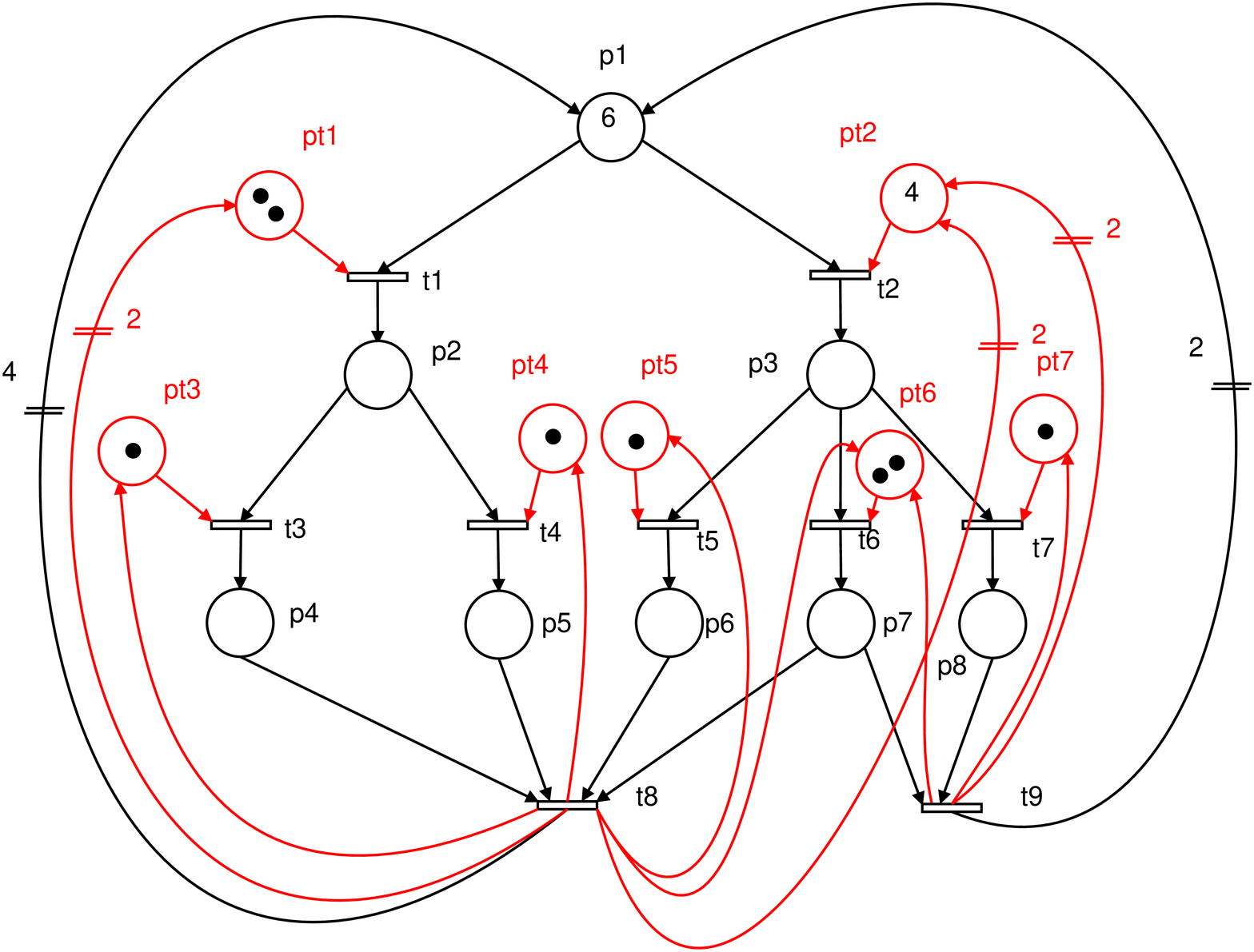}
\caption{\small A non live $\N^{cs}_i$ and the control places (red).} \label{Simcpn}
\end{center}
\end{figure}

In Fig. \ref{Simcpn}, a possible subnet $\N^{cs}_i$ (containing both choices and synchronizations) of a simplified control PN $\N^{cs}$is shown in black. This net is composed by two global T-semiflows $\b x_1=2t_1+2t_2+t_3+t_4+t_5+t_6+t_8$ and $\b x_2=2t_2+t_6+t_7+t_9$. Moreover, the control places $p_{t_j}$ added by applying Alg.\ref{conpla} are represented in red color. 
$\N^{cs}_i$ without the control places is not live because there are choices between \{$t_1, t_2$\}, \{$t_3, t_4$\} and \{$t_5,t_6,t_7$\} and subsequently synchronizations in $t_8$ and $t_9$ are required (\textbf{Pb.} 2). However, after adding the control places, some places become implicit. Particularly, adding $p_{t_1}$ and $p_{t_2}$ with $\b m_0[p_{t_1}]=2$ and $\b m_0[p_{t_2}]=4$ the transition $t_1$ can be initially fired two times and transition $t_2$ four times becoming the place $p_1$ implicit. Removing $p_1$ (implicit) it is possible to combine $p_{t_1}$ with $p_2$ and $p_{t_2}$ with $p_3$. Now, the control places $p_{t_3}$ - $p_{t_7}$ make the previous combined places implicit. Repeating this process iteratively, it is possible to check that the simplified control PN with the added places is live. 

\begin{lemma}
The net system resulted after applying Alg.\ref{conpla} to a structurally non live subnet $\N^{cs}_i$ is live.
\end{lemma}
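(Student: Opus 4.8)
The plan is to show that the control places introduced by Alg.~\ref{conpla} render the original (buffer) places of $\N^{cs}_i$ implicit, so that the net with the control places reduces, through the reduction rules of~\cite{ICSilv93b,ARMura89}, to a Choice Free net that is live; since both implicit-place removal and these reduction rules preserve liveness, liveness of the reduced net will transfer back to the net returned by Alg.~\ref{conpla}. The starting point is the fact, recalled before Def.~\ref{cecktr}, that $\N^{cs}_i$ is consistent and conservative and that its transitions are covered by the minimal T-semiflows in $X$, each of which possesses a check transition $t^i_{ch}$ by Def.~\ref{cecktr} that lies in no other T-semiflow.

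First I would make the effect of each control place explicit as a marking invariant. Since $p_{t_j}$ has the single output arc $\b{Pre}^+[p_{t_j},t_j]=1$ and receives $\b x_i[t_j]$ tokens on every firing of the check transition $t^i_{ch}$, for any firing sequence with count vector $\b\sigma$ one has $\b m[p_{t_j}] = \b m_0[p_{t_j}] - \b\sigma[t_j] + \sum_{i:\,t_j\in ||\b x_i||} \b x_i[t_j]\cdot\b\sigma[t^i_{ch}]$, where $\b m_0[p_{t_j}]=\sum_{i:\,t_j\in ||\b x_i||} \b x_i[t_j]$ by step~16. Non-negativity of $\b m[p_{t_j}]$ then bounds the firings of $t_j$ by exactly the amount granted: one T-semiflow share at the start and one further share per completion of each $\b x_i$ containing $t_j$. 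This is the proportional-firing constraint that removes the over-firing responsible for \textbf{Pb.~2}, and it is precisely the quantity the original conflict place was meant to enforce.

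The heart of the argument is then to show that, with the initial marking $\b m_0$ of step~16, each original place of $\N^{cs}_i$ guarding a conflict becomes implicit: whenever the control places and the remaining inputs enable its output transition, the original place already carries enough tokens, i.e. the implicitness condition $\b m\geq\b{Pre}[P,t]\Rightarrow \b m[p]\geq\b{Pre}[p,t]$ holds for all reachable $\b m$. Granting this, I would remove the implicit place and fuse each freed control place with its adjacent place via the fusion reduction rules, exactly as illustrated on Fig.~\ref{Simcpn} (removing $p_1$, fusing $p_{t_1}$ with $p_2$ and $p_{t_2}$ with $p_3$, and so on). Iterating implicit-place removal and fusion peels away one conflict at a time; because every control place has a single output transition, the terminal net has all places with at most one output transition, hence is Choice Free. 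This terminal net is still consistent and conservative, and $\b m_0$ enables each T-semiflow once, so it is live by the Choice Free liveness result of~\cite{ARTeCoSi97} used in Prop.~\ref{def}; liveness preservation of the reductions then yields liveness of the net produced by Alg.~\ref{conpla}.

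The main obstacle is the implicitness claim together with the termination of the iterative reduction in full generality: one must prove the budget invariant of each $p_{t_j}$ matches the constraint of the original place not only at $\b m_0$ but at every reachable marking, which requires the per-transition shares $\b x_i[t_j]$ to be globally consistent across the different T-semiflows that share the non-check transition $t_j$. This global consistency is precisely what consistency and T-semiflow coverage of $\N^{cs}_i$ guarantee, while conservativeness bounds every marking so that the reduction sequence is finite. Turning the example's iterative peeling into a uniform argument for an arbitrary subnet containing both choices and synchronizations, and verifying that the fusion steps never reintroduce a choice, is where the real work lies.
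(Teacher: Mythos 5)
Your proposal takes essentially the same route as the paper's proof: both argue that the control places added by Alg.~\ref{conpla} make the original conflict-input places implicit, then iteratively remove implicit places and fuse the freed places (exactly the peeling illustrated on Fig.~\ref{Simcpn}) until a trivially live terminal net remains, and transfer liveness back through the liveness-preserving reductions. The one step you explicitly defer --- proving the implicitness condition at every reachable marking and the termination of the iteration in full generality --- is also the step the paper asserts only informally, so your account matches the paper's argument in both structure and level of detail.
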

\begin{proof}
$\N^{cs}_i$ is a consistent and conservative PN (see the proof of Prop. \ref{def}) where each transition represents a local T-semiflow of $\N^s$ and each place represent a buffer. The initial marking $\b{m}_0$ of the net is equal to the marking produced by firing of check transitions.

By applying Alg.\ref{conpla}, a control place $p_{t_j}$ is added for each conflict transition $t_j$ that is not a check transition. These control places limit the firing of conflict transitions making the original input places of these conflicts to be implicit. This happens because the firing of check transitions produce,
\begin {itemize}
\item in its original output places (buffers), enough tokens to fire completely the global T-semiflows.
\item in the new added places $p_{t_j}$, the exact number of tokens for firing the conflict transitions $t_j$ the times that appears in the global T-semiflows. 
\end {itemize}

If a place that receives tokens by firing of check transitions is not a decision (i.e., has only one output transition), it can be fused with the output places of its unique output transition. This procedure can be iterated until the obtained place is a decision. Moreover, the marking of this place is greater or equal to the marking of the added places $p_{t_j}$ (i.e., constraining the firing of the transitions in the conflict). In this way, this place is implicit and can be removed without compromise the liveness of the net.  After removing of an implicit place, its output transitions $t_j$ are not anymore in conflict and each one has only one input place $p_{t_j}$. Starting now with an input place $p_{t_j}$ the full procedure can be iterated and at the end it will be removed being implicit for the next conflict. Finally, the net structure obtained is composed by the check transitions connected with one place by a self-loop. So, the resulting net system is live and consequently the original net system with the added places is live.
\end{proof}

Alg.\ref{conpla} adds some control places and computes an initial marking that force the liveness of the simplified Control PN ($\N^{cs}$). However, the control PN ($\N^c$) is the net used on the control level. So, the new added control places in $\N^{cs}$  should be translated to $\N^c$.
\begin {remark}\label{refine}
Considering that the sequence \{$t^l_j\rightarrow p_{x_l}\rightarrow t^l_k$\} in $\N^c$ has been reduced to a unique transition $t_{x_l}$ in $\N^{cs}$, for each new control buffer ($b_s$) added in $\N^{cs}$ an homologous buffer ($b_h$) is added in $\N^c$ as follows: 
\begin{itemize}
\item $\b {Pre}^c[b_h,t^l_j]=\b {Pre}^{cs}[b_s,t_{x_l}]$
\item $\b {Post}^{c}[b_h,t^l_k]=\b {Post}^{cs}[b_s,t_{x_l}]$
\item $\b m_0[b_h]=\b m_0[b_s]$
\end{itemize}
\end{remark}

\textbf{Absence of check transition.}
Alg. \ref{conpla} assumes that for each T-semiflow of $\N^{cs}_i$ it is possible to compute a check transition according to Def. \ref{cecktr}. Unfortunately, this is not always true, and could exists no check transition for some T-semiflows in $\N^{cs}_i$.
In these situations should be necessary to fix a \emph{set of check transitions} that together fulfill the conditions in Def. \ref{cecktr}. One possible brute force approach for finding a set of check transitions could be to analyze all possible sub-sets of transitions bellonging to the support of the analyzed T-semiflow.
Once a set of check transitions is fixed, the \emph{new virtual check transition} can be added. This new transition must synchronize the set of check transitions. Fig. \ref{checktran} shows a possible set of check transitions (\{$t_1, t_2, t_3$\}) and the new virtual check transition $t_{ch}$. 

\begin{figure}[htb]
\psfrag{t1}{$t_1$}\psfrag{t2}{$t_2$}\psfrag{t3}{$t_3$}\psfrag{t4}{$t_{ch}$}\psfrag{set}{Set of check transitions}\psfrag{add}{Virtual check transition}
   \begin{center}
   \includegraphics[width=0.3\columnwidth]{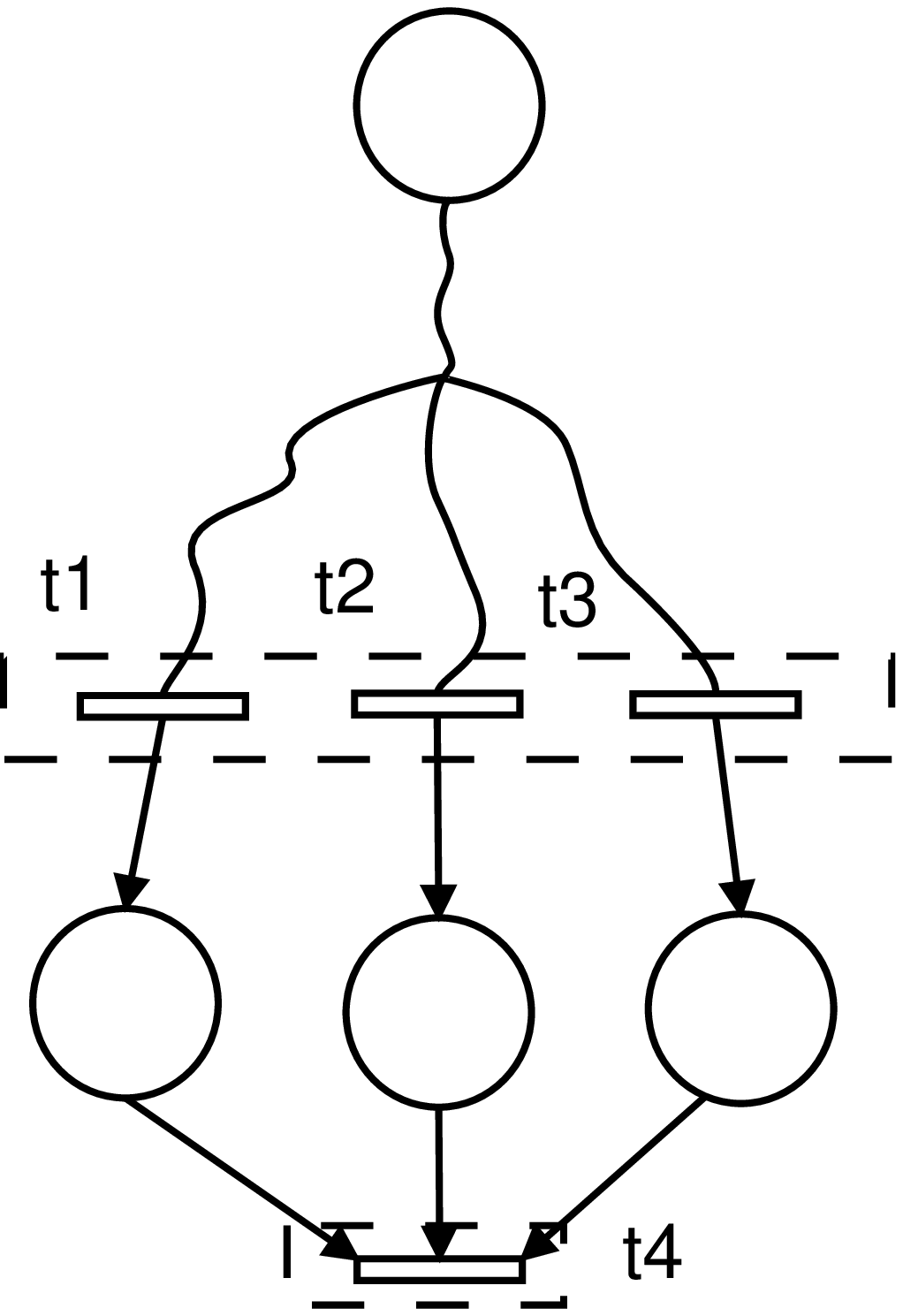}
   \caption{\small Adding a virtual check transition $t_{ch}$.} \label{checktran}
\end{center}
\end{figure}

\section{Control policy and system evolution}\label{sec:contr}
This section discusses the control policy based on a live control PN system. The idea is based on the following two conditions,
\begin{condi} \label{condi1}
 The firing of a local T-semiflow can only starts if all its input buffers have enough tokens to complete the firing of all its transitions.
\end{condi}
\begin{condi} \label{condi2}
When the firing of a local T-semiflow $\b x^i_l$ starts in the agent $\N_i$, it is not possible to fire transitions that are not in the support of $\b x^i_l$ until all transitions of $||\b x^i_l||$ in $\N_i$ are fired.
\end{condi}
To achieve that $\N^c$ constraint the fireability of the transitions in $\N^s$ guard expressions are used. Each transition in $\N^s$ has associated a guard expression. These guard expressions are logical conditions related with the marking of $\N^c$. A transition $t \in T^s$ can be fired only if the associated guard expression is true. Of course, to fire $t$, it is also necessary to be enabled in $\N^s$.

\emph{Labelling in Control PN}.
Each transition in $\N^c$ is labelled with the name of a transition of $\N^s$. In the process of obtaining $\N^c$, for each local T-semiflow $\b x_l$ in $\N^s$ an ordinary subnet ($t^l_j \rightarrow p_{x_l}\rightarrow t^l_k$) in $\N^c$ has been added. The transition $t^l_j$/$t^l_k$ is labelled with the name of the first/last transition of $\b x_l$. In addition, the place $p_{x_l}$ is labelled with the name of the local T-semiflow $\b x_l$.

\emph{Control policy}. A transition $t$ belonging to $\N^s$ can be fired if it is enabled ($\b m^s \geq \b {Pre}^s[P,t]$) and its guard expression is true.   
The guard expression associated with  $t$ becomes true if at least one of the next conditions is fulfilled in $\N^c$:
\begin{itemize}
\item there is a transition labelled as $t$ and it is enabled. 
\item the marking of a place labelled with the name of a local T-semiflow  to which $t$ belongs is equal to one. 
\end{itemize}
First condition prevents to start firing a local T-semiflow whose input buffers do not have enough tokens. While second condition ensures that if a local T-semiflow has been started, all the guard expressions of its transitions become true.
   
\emph{Evolution of the control PN}. The evolution of $\N^c$ is synchronized with the evolution of $\N^s$ (see Fig.~\ref{fig:ControlDiagram} for a schematic view). The events in $\N^s$ (firing of transitions) are considered as input signals in $\N^c$. When a transition $t_k$ is fired in $\N^s$, then in $\N^c$:
\begin{itemize}
\item if no transition labelled $t_k$ exists, no transition is fired.
\item if there exist enabled transitions labelled $t_k$, the controller (scheduler) chooses one and is fired in $\N^c$. 
\end{itemize}

\begin{algorithm}[h]\label{policy}
 \begin{algorithmic}[1]
\REQUIRE A SSP $\langle \N^{d}= \langle P^{d}, T^{d}, \b{Pre}^{d}, \b{Post}^{d} \rangle, \b m^s_0 \rangle$ and its live control PN system $\langle \N^{c}= \langle P^{c}, T^{c}, \b{Pre}^{c},$ $\b{Post}^{c} \rangle, \b m^c_0 \rangle$.
\ENSURE A live evolution of $\langle \N^{d}, \b{m}_0^s \rangle$ through $\langle \N^{c}, \b{m}_0^c \rangle$
\STATE $\b m^s:= \b m^s_0$ \COMMENT{initializing the current state in $\N^{d}$}
\STATE $\b m^c:= \b m^c_0$ \COMMENT{initializing the current state in $\N^{c}$}
\STATE $T^e:= \emptyset$ \COMMENT{initializing the set of enabled transitions in $\N^s$ at marking $\b m^s$}
\STATE $T^f:= \emptyset$ \COMMENT{initializing the set of transition that can be fired in $\N^s$}
\FORALL {$t_i \in T^s$}
\IF{$\b m^s \geq \b {Pre}^s[P^s,t_i]$}
\STATE $T^e:=T^e \cup t_i$;
\ENDIF
\ENDFOR
\FORALL {$t_j \in T^e$}
\IF{the guard of $t_j$ is true at marking $\b m^c$}
\STATE $T^f:=T^f \cup t_j$
\ENDIF
\ENDFOR
\STATE A transition $t_k \in T^f$ is fired in $\N^s$
\STATE {If there exists $T^s \subseteq T^f$ such that all $t_o \in T^s$ have the same label in $\N^c$ as $t_k$, select one  as $t_k$.}
\STATE $\b m^s:= \b m^s+\b C^s[P^s,t_k]$ \COMMENT{$\b m^s$ is updated}
\IF{there exists a transition labelled as $t_k$ in $\N^c$}
\STATE a transition labelled as $t_k$ is fired in $\N^c$ 
\STATE $\b m^c:= \b m^c+\b C^c[P^c,t_k]$ \COMMENT{$\b m^c$ is updated}
\ENDIF
\STATE go to Step 3
 \end{algorithmic}
 \caption{Control policy and systems evolution on a SSP net $\N^s$ and its control PN $\N^c$.}
 \end{algorithm}

Alg. \ref{policy} implements the control policy and the system evolution on a SSP net and its control PN.

\begin{figure}[ht]
\psfrag{Input signal (t)}{Input signal ($t_k$)}\psfrag{Event (t)}{Event ($t_k$)}
\psfrag{enabeled}{enabled}
\psfrag{Control}{Control}
\psfrag{m}{$\b m^s$}\psfrag{mc}{$\b {m^c}$}
\psfrag{Te}{$T^e$}\psfrag{Tf}{$T^f$}
\psfrag{Nc}{$\N^c$}\psfrag{Nd}{$\N^s$}
\psfrag{DSSP}{SSP}
\psfrag{Enabeled}{Enabled}\psfrag{transitions}{transitions}
   \begin{center}
   \includegraphics[width=1\columnwidth]{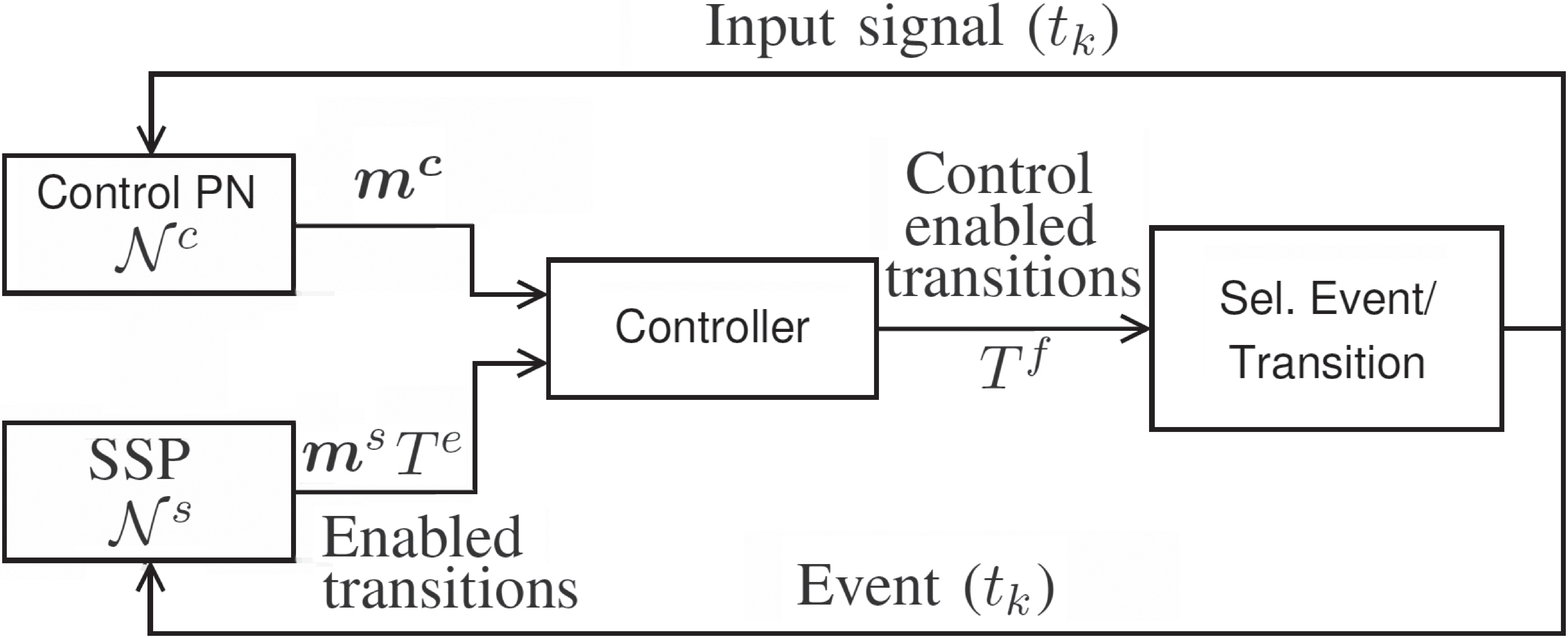}
\caption{\small Control diagram of a SSP structure using a control PN} \label{fig:ControlDiagram}
\end{center}
\end{figure} 

\ignore{Fig.~\ref{fig:ControlDiagram} shows the control flow diagram of the proposed liveness enforcement approach for a SSP structure ($\N^s$) using a control PN ($\N^c$). The control policy depends on the markings of $\N^s$ and $\N^c$. The marking of $\N^c$ inhibits some enabled transitions to be fired in $\N^s$ if a deadlock could appear. A transition $t$ is chosen from the set of control-enabled transitions.}

\ignore{
\begin{figure*}
 \begin{center}
\centering
\psfrag{pt1}{$\red{p_{t_1}}$}\psfrag{pt2}{$\red{p_{t_2}}$}\psfrag{pt3}{$\red{p_{t_3}}$}\psfrag{pt4}{$\red{p_{t_4}}$}
\psfrag{pt5}{$\red{p_{t_5}}$}\psfrag{pt6}{$\red{p_{t_6}}$}\psfrag{pt7}{$\red{p_{t_7}}$}
\psfrag{t1}{$t_1$}\psfrag{t2}{$t_2$}\psfrag{t3}{$t_3$}\psfrag{t4}{$t_4$}\psfrag{t5}{$t_5$}\psfrag{t6}{$t_6$}\psfrag{t7}{$t_7$}\psfrag{t8}{$t_8$}\psfrag{t9}{$t_9$}\psfrag{p1}{$p_1$}\psfrag{p2}{$p_2$}\psfrag{p3}{$p_3$}\psfrag{p4}{$p_4$}\psfrag{p5}{$p_5$}\psfrag{p6}{$p_6$}\psfrag{p7}{$p_7$}\psfrag{p8}{$p_8$}\psfrag{b1}{$b_{1}$}\psfrag{b2}{$b_{2}$}\psfrag{b3}{$b_{3}$}\psfrag{b4}{$b_{4}$}\psfrag{x1}{$p_{x_{1}}$}\psfrag{x2}{$p_{x_2}$}\psfrag{x3}{$p_{x_3}$}\psfrag{x4}{$p_{x_4}$}\psfrag{N1}{$\N_1$}\psfrag{N2}{$\N_2$}\psfrag{bN1}{$p_{\N_1}$}\psfrag{bN2}{$p_{\N_2}$}\psfrag{bN3}{$p_{\N_3}$}\psfrag{N3}{$\N_3$}
 
    \centering \subfigure[]{\includegraphics[width=0.75\columnwidth]{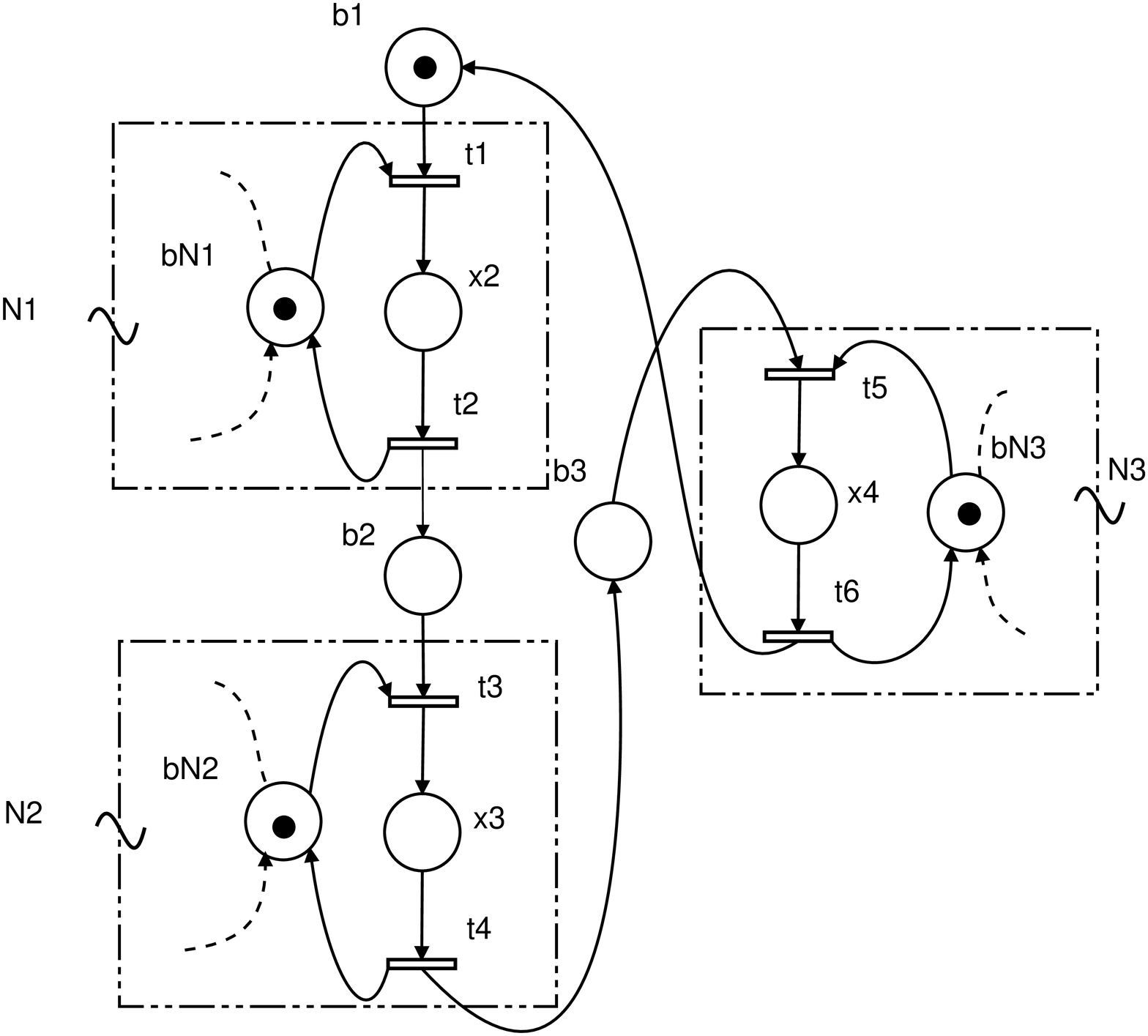}\label{fig:AdvanceOut}}\hspace{0.1\textwidth}
    \centering \subfigure[]{\includegraphics[width=0.75\columnwidth]{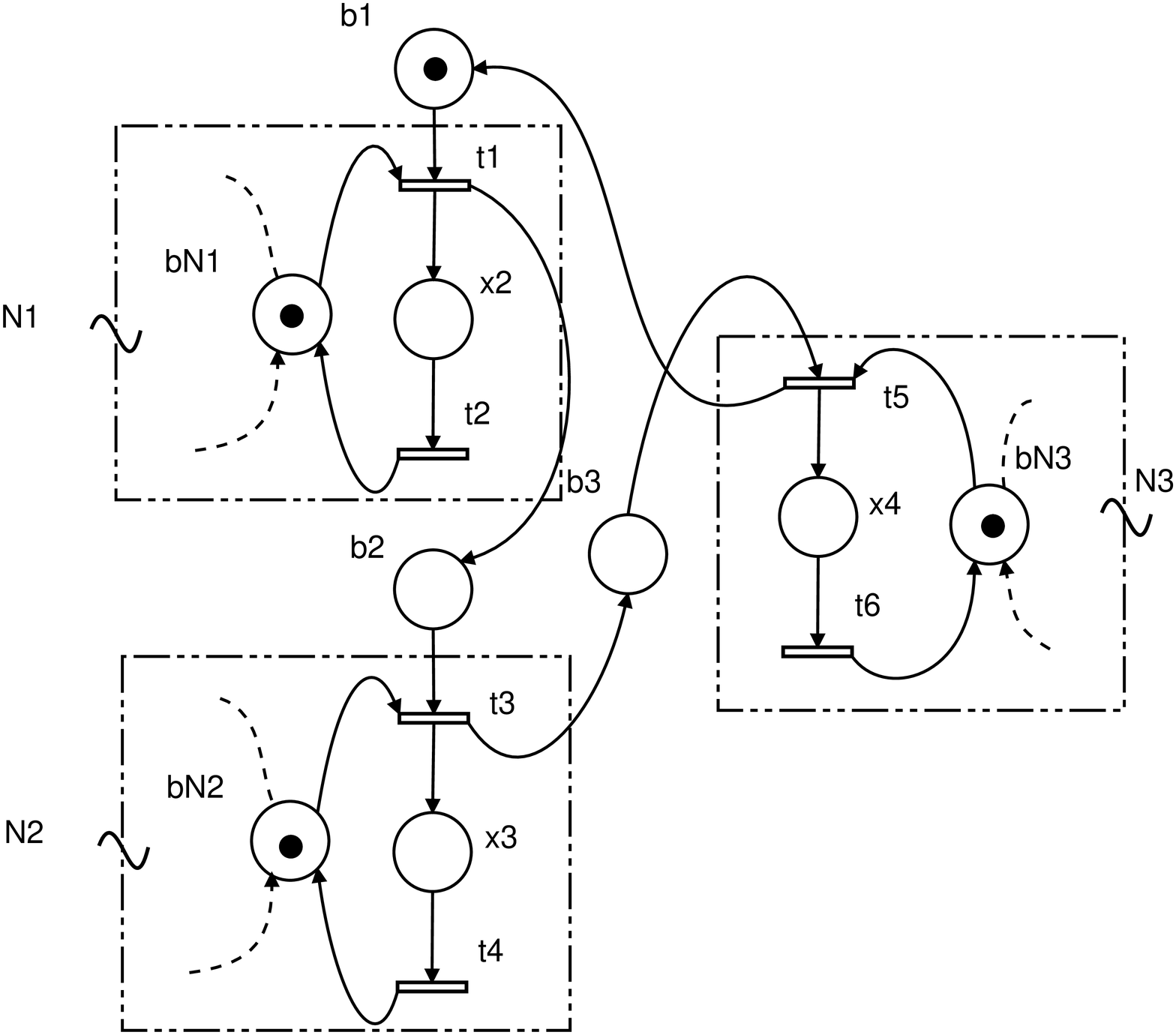}\label{fig:AdvanceOut2}}
\caption{\small A Control PN where: a) the production of tokens is generated from the last transition of the sequences; b) the production of tokens is generated from the first transition of the sequences} \label{fig:Advance}
\end{center}
\end{figure*}}

\begin{theorem}
Let $\N^s$ be a non live SSP structure and $\N^c$ the structurally live control PN obtained by Alg. \ref{alg:2}, eventually after applying Alg. \ref{conpla} to force its structural liveness. If the initial marking of buffers in $\N^s$ is greater than or equal to to the initial marking of buffers that makes $\N^c$ live, by following the control policy described in Alg.~\ref{policy}, the controlled SSP is live.
\end{theorem}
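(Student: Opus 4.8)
The plan is to reduce the liveness of the controlled SSP to the already-established liveness of the control net, by building a behavioral correspondence between the two systems as they evolve under Alg.~\ref{policy}. Call a reachable state of the controlled $\N^s$ \emph{stable} if every agent $\N_i$ rests in its waiting place $p_i^e$; correspondingly, a marking of $\N^c$ is stable when every $p_{x_l}$ is empty. At a stable state the whole configuration is fixed by the buffer contents, and projecting onto the buffers yields a correspondence with the reduced net $\N^{cs}$ of Rules~\ref{reduc1}--\ref{reduc2}, whose transitions $t_{x_l}$ each represent one complete local T-semiflow of $\N^s$ and whose places represent buffers. Since those rules preserve liveness, $\N^c$ is live iff $\N^{cs}$ is, so it suffices to drive $\N^{cs}$.

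The technical core is an \emph{atomic-completion} lemma: whenever the guard of a first transition $t^l_{first}$ holds (equivalently $t^l_j$ is enabled in $\N^c$) and $t^l_{first}$ is enabled in $\N^s$, the entire local T-semiflow $\b x^i_l$ can be fired to completion, returning the single token of $\N_i$ to $p_i^e$. This rests on three facts: (i) enabledness of $t^l_j$ together with Condition~\ref{condi1} guarantees the input buffers already hold the full amount $\sum_{t\in ||\b x^i_l||}\b{Pre}^s[b,t]$ the semiflow will consume, so no buffer shortage can block the run; (ii) each $\N_i$ is a strongly connected mono-marked state machine, so its internal part admits a firing sequence realizing $\b x^i_l$ with no internal synchronization to deadlock on; and (iii) Condition~\ref{condi2}, implemented by the guard clause $\b m^c[p_{x_l}]=1$, forbids firing transitions outside $||\b x^i_l||$ in $\N_i$ until the cycle closes, so the run is never diverted. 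Alongside it I would maintain a \emph{domination invariant}: at every stable state the buffer marking of $\N^s$ equals that of the control net plus the constant nonnegative offset $\Delta\ge\b 0$ fixed by hypothesis. The invariant is preserved because one complete local T-semiflow changes the $\N^s$ buffers by exactly the net vector produced by the pair $t^l_j,t^l_k$ in $\N^c$ (the state-machine transitions cancel, $\b C\cdot\b x=0$, leaving only the buffer effect).

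Liveness then transfers. Fix any $t\in T^s$ and any reachable state; $t$ lies in the support of some local T-semiflow $\b x^i_l$. Using atomic completion, finish every in-progress semiflow to reach a stable state, whose control image is a reachable marking of $\N^{cs}$. Since $\N^{cs}$ is live, there is a firing sequence of $\N^{cs}$ after which $t_{x_l}$ is enabled; realize it in $\N^s$ by executing, one at a time, the complete local T-semiflows corresponding to its transitions, each of which is permitted by the policy and, by the domination invariant, always finds sufficient buffer tokens in $\N^s$. After this realization $t^l_j$ is enabled in $\N^c$, so the guard of $t^l_{first}$ is true and, by atomic completion, all of $\b x^i_l$ fires; in particular $t$ fires. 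As $t$ was arbitrary, every transition is live and the controlled SSP is live.

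The main obstacle is the atomic-completion lemma together with the synchronization bookkeeping: one must argue rigorously that the policy never lets an agent begin a semiflow it cannot finish, that concurrent executions in distinct agents compose additively through the state equation so the buffer accounting stays consistent, and that the scheduler's choice, when a single label $t_k$ tags several enabled control transitions, can always be aligned with the $\N^c$ sequence witnessing liveness. Equally delicate is the reachability side of the correspondence --- showing that every buffer marking reachable in $\N^{cs}$ is actually realizable by the controlled $\N^s$, and conversely --- which is where the simulation argument must be made airtight.
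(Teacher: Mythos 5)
Your proposal follows the same fundamental strategy as the paper's proof: both rest on the observation that Conditions~\ref{condi1} and~\ref{condi2}, as implemented by the guards, force each local T-semiflow to fire atomically (it may start only when its input buffers can sustain a complete run, and once started it must run to completion), and both then transfer liveness from the control net to the controlled SSP. The difference is one of rigor and packaging rather than of route. The paper's proof is essentially a prose walkthrough of Alg.~\ref{policy}: it restates what each step of the policy does, asserts that the marking hypothesis lets every global T-semiflow fire in isolation, and concludes from the liveness of $\langle \N^c, \b{m}_0\rangle$ that only non-livelocking local T-semiflows are permitted. You instead isolate the two facts that actually carry the argument --- an atomic-completion lemma and a domination invariant relating the buffer markings of $\N^s$ and $\N^c$ at stable states --- and organize the transfer as a simulation between the controlled $\N^s$, observed at stable states, and the reduced net $\N^{cs}$, exploiting that Rules~\ref{reduc1} and~\ref{reduc2} preserve liveness. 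This buys a genuinely checkable argument: your invariant makes explicit why the gradual buffer consumption in $\N^s$ never falls behind the lump-sum consumption performed by $t^l_j$ in $\N^c$, and why destination-privateness together with the place $p_{\N_i}$ protects the reserved tokens during a run --- points the paper leaves implicit. The obstacles you flag at the end (scheduler choice among equally labelled control transitions, realizability of every reachable $\N^{cs}$ marking) are real but benign: liveness of $\langle\N^c,\b m_0\rangle$ holds from every reachable marking of the control net, so no a priori alignment of the scheduler with a particular witnessing sequence is required. In short: same approach, but your version supplies the lemma structure that the paper's own proof only gestures at.
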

 
\begin{proof}
If the control PN $\N^c$ is structurally live, then by definition there exists an initial marking $\b{m}_0$ that makes the net system $\langle \N^c, \b{m}_0 \rangle$ live. Since $\N^c$ is representing the relations between the T-semiflows of $\N^s$ and buffers, putting the marking of the buffers in $\N^s$ equal to the marking of buffers that makes $\N^c$ live will allow the firing of all global T-semiflows in isolation (not implying that $\N^s$ is live).

The control policy in Alg. \ref{policy}, computes first the set of enabled transitions of SSP $\N^s$ in $T^e$ (steps 5-9). Then set $T^f$ is obtained from $T^e$ by removing those transitions with guard expression not valid (steps 10-14).  All transitions in $T^f$ are enabled in $\N^s$ and their guard expressions are true. Guard expressions are \emph{True} if Conditions \ref{condi1} and \ref{condi2} are true. 

Condition \ref{condi1} ensures that if a first transition of a local T-semiflow is in $T^f$ then its firing means that a local T-semiflow is starting to fire and there exists enough tokens in all input buffers to completely fire it. If the first transition of a local T-semiflow is fired in $\N^s$ (is chosen from the set $T^f$ in steps 15-17) then the transition labelled with the same name is also fired in $\N^c$ (steps 18-21), and consequently a token is generated in a place labelled with the name of the local T-semiflow in $\N^c$. Notice that, if exist more transitions labelled with the same labelled in $\N^c$, then the scheduler select one to fire (step 16). 

If a transition of $T^f$ is not a first transition of a local T-semiflow, Condition \ref{condi2} ensures that it belongs to the support of the local T-semiflows that haves already been started to fire. This is done by simple checking the marking of the corresponding place of $\N^c$. When a last transition of a local T-semiflow is fired, the waiting place is marked again and a new local T-semiflow could start firing.

As $\N^s$ is consistent and conservative, there exist sequences of firing of the local T-semiflow to fire global T-semiflows. The relations between the T-semiflows and buffers are modeled in the control PN $\N^c$. Since $\langle \N^c, \b{m}_0 \rangle$ is live it will allow to fire only the local T-semiflows that will not produce livelocks forcing the liveness of controlled SSP.
\end{proof}

The computational complexity of the proposed live enforcement approach is exponential because it is necessary compute the set of global minimal T-semiflows. However, this computation should be done only once, at the beginning of the approach and it is not necessary to be iterated, as for example in the case of controlling bad siphons.

\begin{example}

In $\N^s$ of the Fig.~\ref{fig:DSSPpb1}, transitions $t_1$, $t_{9}$, $t_5$ and $t_{12}$ are enabled, but only $t_1$, $t_{9}$ and $t_5$ can be fired because in its $\N^c$ given in Fig.~\ref{fig:control1} transition labelled as $t_{12}$ ($t^{9}_{12}$) is not enabled. In this way, only the local T-semiflows whose input buffers have enough tokens for completing their firing are allowed to start. Let us assume that $t_5$ (Fig.~\ref{fig:DSSPpb1}) is fired in $\N^s$:
\begin {itemize}
\item automatically the enabled transition $t^8_5$ (labelled as $t_5$) has to be fired in $\N^c$. 
\item Now in $\N^s$, the transitions $t_1$, $t_{9}$, $t_4$ and $t_{6}$ are enabled. 
\item In $\N^c$ transition labelled as $t_1$ ($t^4_1$ and $t^5_1$) and $t_9$ ($t^6_9$) are enabled, so $t_1$ and $t_9$ can be fired in $\N^s$. Moreover, in $\N^c$ the marking of the place $p_{x8}$ labelled as $x_8$ is equal to 1, so transition $t_{6} \in ||x_8||$ can be fired in $\N^s$. However, in $\N^c$ the marking of the place $p_{x7}$ labelled as $x_7$ is equal to 0, so transition $t_4 \in ||x_7||$ cannot be fired. Let us assume that $t_6$ is fired in $\N^s$.
\item The firing of $t_6$ in $\N^s$ does not imply changes in $\N^c$ because there exists no transition labelled $t_6$ in this net being a transition belonging to the support of a local T-semiflow but is not the first or the last one.    
\end {itemize}          
\end{example}

\ignore{
In order to reduce the restrictiveness of the approach, a relaxation is achieved in the control PN: the production of tokens from the sequences $t_a\rightarrow p \rightarrow t_b$ to the output buffers is advanced from transition $t_b$ to transition $t_a$. In this way, from the moment when the sequence $t_a\rightarrow p \rightarrow t_b$ start firing, the tokens are produced in the output buffers. These tokens allow that other T-semiflows can start its firing despite the fact that in the DSSP net the input buffers do not have enough tokens yet. 

Condition \ref{condi2} of the proposed control policy imposes that when a local T-semiflow start its firing in $\N^s$, all its transition must be fired until the end. In this way, the advance of the token production in the buffers of $\N^c$ do not compromise the liveness of the controlled system due to it is guaranteed (by condition 2) that the same token production will be made in the corresponding buffers of $\N^s$. }

\ignore{
\section{Composition of SSP system and control PN system}
Our approach is based on obtaining a control PN $\N^c$ that through guard expressions limit the firing of transitions in the SSP net $\N^s$. In this way, $\N^c$ can be seen as a high level scheduler that guides the firing of transitions in $\N^s$. Having a control PN is a great advantage in the case of complex systems, since $\N^c$ gives an overview of what is happening in the system. However, working with smaller systems could be interesting to have a unique net including $\N^s$ and $\N^c$. The idea is to impose the same initial conditions (1 and 2) that have been imposed through the control policy. To achieve this, some modifications are made in $\N^s$. First, in order that each local T-semiflow has its own ``first'' transition, some transitions are duplicated. Then the input buffers of each local T-semiflow are pre-assinged to its first transitions. At this point, condition 1 is fulfilled. Finally, the second condition is imposed by new places that are included in order to ``guide'' the complete firing of a local T-semiflow when its first transition has been fired. The steps to obtain this unique net system are as follows.
\begin{enumerate}
\item \emph{Duplicate first transitions}. Each first transition belonging to more than one local T-semiflow is duplicated as many times as the number of local T-semiflows to which belongs.
\item \emph{Pre-assignment of the buffers}. The input buffers of each local T-semiflow are pre-assinged to its first transition.
\item \emph{Include new places}. For each conflict transition $t_i$ (which is not a first transition), a place $p_{t_i}$ is added such that $p_{t_i}$ limits the firing of $t_i$ ($\b {Pre}[p_{t_i},t_i]=1$). Moreover, from each first transition ($t^k_j$) of the local T-semiflows to which $t_i$ belongs, there is an arc to $p_{t_i}$ ($\b {Post}[p_{t_i},t^k_j]=1$). 
\item \emph{Obtain the live control PN and ensure or force its liveness}. It is necessary to know if control buffers are needed to force the liveness in the control PN.
\item \emph{Include the control buffers}. If control buffers have been introduced to force the liveness of the control PN, these buffers must be added.
\end{enumerate}
}

\ignore{
Fig. \ref{DSSPunique} shows the equivalent controlled system $\N^e$ obtained from applying Step 1-5 to the non live SSP system $\N^s$ in Fig. \ref{fig:DSSPpb1}. The modifications performed are showed in different colors. In step 1 (blue) $t_1$ and $t_5$ have been transformed in two pairs of transition: \{$t^4_1$,$t^5_1$\} and \{$t^7_5$,$t^8_5$\}. In step 2 (green), the input buffers of the local T-semiflows has been preassigned to its first transitions. For example, $b_2$ is an input buffer of $\b x_7$ pre-assinged to its first transition $t^7_5$. Finally, in step 3 (red) for each transition in conflict relation that is not a first transition (\{$t_2$,$t_8$\} and \{$t_4$,$t_6$\}), a new place ($p_{t_2}$, $p_{t_8}$, $p_{t_4}$ and $p_{t_6}$) has been added. 
Step 5 is not applied because in Step 4 we obtain a live control PN (Fig. \ref{fig:control1}) without adding control buffers.

In the equivalent controlled net $\N^e$ of Fig. \ref{DSSPunique}, a local T-semiflow only can start if its input buffers have enough tokens to fire all its transitions. Moreover, once a local T-semiflow has started, all its transitions are fired. For example, $\b x_8$ can only start when its input buffer $b_3$ has a token because $b_3$ has been pre-assigned to transition $t^8_5$ . Once $\b x_8$ has started its firing by transition $t^8_5$, a token is produced in $p_{t_6}$. This token will enable the conflict transition $t_6$ and will guide the agent to fire the local T-semiflow $\b x_8$ completely.}

\begin{figure}[ht]
   \begin{center}
	\psfrag{t1}{$t_1$}\psfrag{t2}{$t_2$}\psfrag{t3}{$t_3$}\psfrag{t4}{$t_4$}
\psfrag{t5}{$t_5$}\psfrag{t6}{$t_6$}\psfrag{t7}{$t_7$}\psfrag{t8}{$t_8$}
\psfrag{t9}{$t_9$}\psfrag{t10}{$t_{10}$}\psfrag{t11}{$t_{11}$}\psfrag{t12}{$t_{12}$}
\psfrag{t13}{$t_{13}$}\psfrag{t14}{\blue{$t^4_1$}}\psfrag{t15}{\blue{$t^5_1$}}
\psfrag{p1}{$p_1$}\psfrag{p2}{$p_2$}\psfrag{pt2}{\red$p_{t_2}$}\psfrag{pt8}{\red$p_{t_8}$}\psfrag{pt4}{\red$p_{t_4}$}\psfrag{pt6}{\red$p_{t_6}$}
\psfrag{t54}{\blue{$t^7_5$}}\psfrag{t55}{\blue{$t^8_5$}}
\psfrag{p3}{$p_3$}\psfrag{p4}{$p_4$}
\psfrag{p5}{$p_5$}\psfrag{p6}{$p_6$}
\psfrag{p7}{$p_7$}\psfrag{p8}{$p_8$}
\psfrag{p9}{$p_9$}\psfrag{p10}{$p_{10}$}
\psfrag{p11}{$p_{11}$}
\psfrag{b1}{$b_1$}\psfrag{b2}{$b_2$}
\psfrag{b3}{$b_3$}\psfrag{b4}{$b_4$}\psfrag{b5}{$b_5$}
\psfrag{N1}{$\N_1$}\psfrag{N2}{$\N_2$}
    \includegraphics[width=.8\columnwidth]{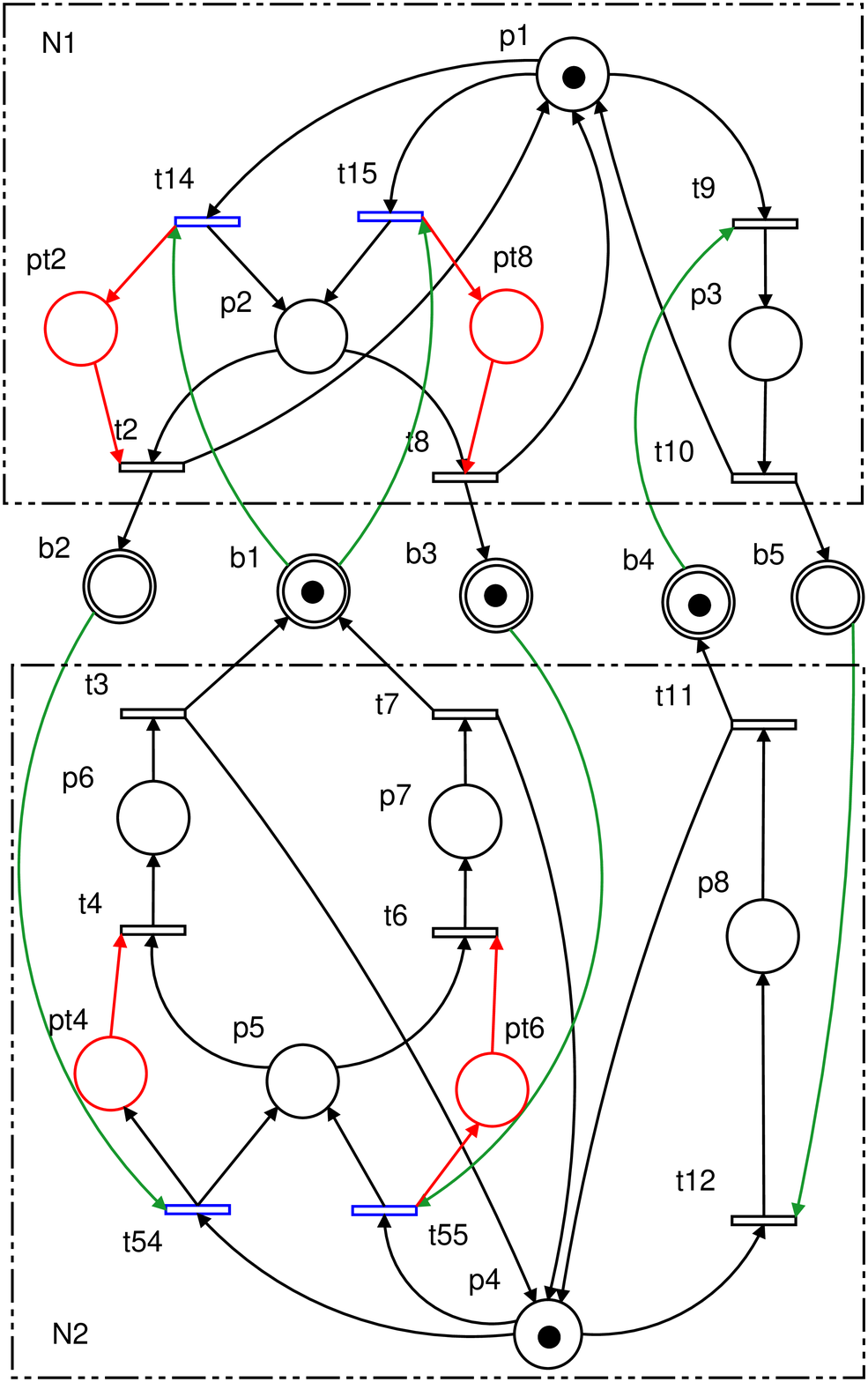}
\caption{PN system obtained by doing the synchronous composition of the SSP in Fig. \ref{fig:DSSPpb1} with its control PN in Fig. \ref{fig:control1}.}\label{DSSPunique}
\end{center}
\end{figure}

\ignore{
\begin{remark}
This paper presents the first method of enforcing liveness of SSP systems that keeps the distributed property of the net. As we mentioned before, the siphon-based method cannot be applyed in general, not only because the SSP will not be distributed but also because the SSP arte not ordinary. Moreover, the approach from RAS are also not aplicable in general since the class of systems are not comparable, as discussed in Section \ref{siphonbased}. However, the main drawback of the approach is the permissimibility of the approach as it can be seen in Tab. \ref{tab:simul} where are shown the number of reachable markings of the original not live SPP in Fig. \ref{fig:DSSPpb1}, the number of reachable markings after applying the siphon based method and the reachable marking of the net obtained after applying the approach in this paper (Fig. \ref{DSSPunique}).
\end{remark}}

\section{Conclusions}\label{sec:con}

Synchronized sequential processes (SSP) are modular PN systems used for modeling and analysis of systems composed by distributed cooperating sequential processes. This paper presents a liveness enforcement strategy for SSP systems formalized on two levels: \emph{execution} and \emph{control}. In the execution level the original SSP system evolves conditioned by the control level, a kind scheduler: each transition in SSP net has associated a guard expression which depends on the state of the control level. A \emph{control PN} is obtained from the SSP structure by using Alg. \ref{alg:2}. Control PN models the consumption/production relation between buffers and local T-semiflows of the SSP net. If the control PN is not live, Alg. \ref{conpla} enforce its liveness by adding some control places. Both SSP and control PN evolve synchronously following a given control evolution policy synthesized by Alg. \ref{policy}. 

To control is always to constraint the possible behaviors of the plant. Since the method presented in this paper is constraining the firing of local T-semiflows of the global T-semiflows (not allowing to start firing again the global T-semiflow until all local T-semiflows are not fired in the corresponding proportions) the permissitivity of this approach may be less than of other approaches in literature. For example, by doing the synchronous composition of the non live SSP system in Fig. \ref{fig:DSSPpb1} with its supervisory control PN system in Fig. \ref{fig:control1}, the net system in Fig. \ref{DSSPunique} is obtained. 

\begin{table}[htb]
\centering
\caption{Simulation result using SSP net system in Fig. \ref{fig:DSSPpb1} }
\label{tab:simul}
\begin{tabular}{|c|c|c|}
\hline
\textbf{Net system}   & \textbf{\# Reachable makings} & \textbf{\# livelock marking} \\ \hline
SSP (Fig. \ref{fig:DSSPpb1})           & 180                 & 13                 \\ \hline
SSP + monitors   & 139                 & 0                  \\ \hline
SSP + control (Fig. \ref{DSSPunique}) & 94                  & 0                  \\ \hline
\end{tabular}
\end{table}

Tab. \ref{tab:simul} shows the number of reachable markings of the non live SSP in Fig. \ref{fig:DSSPpb1}, the number reachable markings of the live net system obtained by controlling the bad siphons (method well known understood in RAS) and the number of reachable markings of the composition of the SSP with the control net proposed in this paper. The number of reachable markings is in general greater if the bad siphons are controlled. Nevertheless, as it is discussed in Section \ref{sec:dssp}, unfortunately distributeveness of the system is lost. Up to our knowledge, the approach in this paper is one of the first one on dealing with liveness enforcement in this class, method that keeps the controlled system to be also a SSP (hence having this distributiveness property).

As a future work, the permissibility of the approach with respect to the number of reachable marking will be improved. By using the Synchrony Theory \cite{TRPetri76,ICSi87} the liveness of the control PN could be enforced and in some cases more permissive approach as the one presented in Sec. \ref{sec:livefor} could be obtained. Another future work will be to check permissibility with respect to the throughput. Notice that timed SSP could be throughput non monotone systems (contrary to DSSP systems) and a reduction in the number of reachable markings is not necessary always worst or at least in not the same proportion as the reduction in the number of reachable markings.

\bibliography{2019_TAC}

\end{document}